\newtheorem{claim}{Claim}
\providecommand{\customgenericname}{}
\newcommand{\newcustomtheorem}[2]{%
  \newenvironment{#1}[1]
  {%
   \renewcommand\customgenericname{#2}%
   \renewcommand\theinnercustomgeneric{##1}%
   \innercustomgeneric
  }
  {\endinnercustomgeneric}
}
\algnewcommand\algorithmicinput{\textbf{Input:}}
\algnewcommand\INPUT{\item[\algorithmicinput]}
\algnewcommand\algorithmicoutput{\textbf{Output:}}
\algnewcommand\OUTPUT{\item[\algorithmicoutput]}
\newcommand{\Oh}[1]{\mathcal{O}\!\left( #1\right)}
\newcommand{\junk}[1]{{}}
\newif\ifDoubleBlind
\newcommand{\ie}{i.e.\ }
\newcommand{\etal}{et~al.~}
\newcommand{\CC}{C\texttt{++}}
\newcommand{\match}{\mathcal{M}}
\newcommand{\palette}{\mathcal{C}}
\newcommand{\col}{\xi}
\newcommand{\colrec}{\texttt{RecurseCol}}
\newcommand{\colcount}{\texttt{CountCol}}
\newcommand{\colhp}{\texttt{RandRCol}}
\newcommand{\colbgkls}{\texttt{HierCol}}
\newcommand{\matchtriv}{\texttt{TrivialMatch}}
\newcommand{\matchbgs}{\texttt{Hier1Match}}
\newcommand{\matchsolomon}{\texttt{Hier2Match}}
\newcommand{\matchbbhss}{\texttt{RandR1Match}}
\newcommand{\matchrr}{\texttt{RandR2Match}}
\newcommand*{\shorterDots}{.\kern-0.06em.\kern-0.06em.} 
\newcommand{\papertitle}{Random Rank-Based, Hierarchical or Trivial: Which Dynamic Graph Algorithm Performs Best
in Practice?}
\newcommand{\mytitle}{\papertitle}
\newcommand{\mytitle}{\papertitle}
\date{}
\author{Double Blind}
\author{Monika Henzinger\thanks{University of Vienna, Faculty of Computer Science, Vienna, Austria, \texttt{monika.henzinger@univie.ac.at}} 
\and Alexander Noe \thanks{University of Vienna, Faculty of Computer Science, Vienna, Austria, \texttt{alexander.noe@univie.ac.at}}}
\begin{document}

\title{\mytitle}

\maketitle

\begin{abstract}
Fully dynamic graph algorithms that achieve polylogarithmic or better time per operation use either a hierarchical graph decomposition or random-rank based approach. 
There are so far two graph properties for which efficient algorithms for both types of data structures exist, namely fully dynamic $(\Delta + 1)$ coloring and fully dynamic maximal matching.
In this paper we present an extensive experimental study of these two types of algorithms for these two problems together with very simple baseline algorithms to determine which of these algorithms are the fastest.
Our results indicate that the data structures used by the different algorithms
dominate their performance.

\end{abstract}
\ifDoubleBlind
\vfill
\clearpage
\setcounter{page}{1}
\pagenumbering{arabic}
\fancyfoot[C]{\thepage}
\pagestyle{fancy}
\else
\fi{}

\section{Introduction} 
Real-world graphs are huge and changing dynamically, creating the need for data structures that efficiently maintain properties of such graphs. Such data structures
are called \emph{fully dynamic graph algorithms} and usually they assume that the number
of vertices is fixed and each update operation either inserts or deletes an edge.
While there exist graph properties, such as single-source shortest paths distances or maximum flow values in weighted graphs, for which (under some popular complexity assumptions) no sublinear update time is possible~\cite{abboud2014popular,henzinger2015unifying,DBLP:conf/icalp/Dahlgaard16}, there also exists other graph properties where algorithms with polylogarithmic or even constant update times are known.
Such graph properties are connectivity~\cite{DBLP:journals/jacm/HenzingerK99,HolmLT98}, minimum spanning tree~\cite{HolmLT98}, and maximal independent set~\cite{behnezhad2019fully,ChechikZ19},
which all have polylogarithmic time per operation, and
maximal matching~\cite{solomon2016fully}, $(\Delta+1)$-vertex coloring~\cite{henzinger2020constant,bhattacharya2019fully}, where $\Delta$ is the maximum degree in the graph, and $(1+\epsilon)$-approximate minimum spanning tree value~\cite{DBLP:journals/corr/abs-2011-00977}, which all have constant update time.
All (non-trivial) 
dynamic algorithms with polylogarithmic or faster update time used some variant of hierarchical decomposition of a graph into logarithmic ``layers''~\cite{DBLP:journals/jacm/HenzingerK99,HolmLT98,solomon2016fully,bhattacharya2019fully,DBLP:journals/corr/abs-2011-00977} but in 2019 a new technique, based on giving either each vertex or each edge in the graph a random value of $[0,1]$, called \emph{rank}, was introduced into the field independently in three papers~\cite{behnezhad2019fully,ChechikZ19,henzinger2020constant}.

As the asymptotic running times achieved by the two techniques are similar an obvious question is how they compare in a empirical evaluation. The purpose of this paper is to answer exactly this question. 
As $(\Delta+1)$-vertex coloring and maximal matching are the only graph properties for which efficient dynamic algorithms exist that use either one or the other technique we implemented the best dynamic algorithms for these problems, together with one or two trivial dynamic algorithms as a baseline.
Let $n$ denote the number of nodes in the graph, $m$ its current number of edges, and $\Delta$ its maximum degree over the whole sequence of operations.
Specifically, we compared the following algorithms:

(a) For maximal matching we compared the random-rank based algorithm of Behnezhad et al.~\cite{behnezhad2019fully} which takes $O(\log^4 n)$ amortized update time, the hierarchical algorithm of Solomon~\cite{solomon2016fully} which takes $O(1)$ amortized update time,
the 2-level version of the hierarchical algorithm of Baswana, Gupta and Sen~\cite{baswana2015fully} which takes $O(\sqrt n)$ amortized update time, our own variant of~\cite{behnezhad2019fully} with $O(nm)$ worst-case update time, and a trivial dynamic matching algorithm with $O(n)$ worst-case matching time.

(b) For $(\Delta+1)$-vertex coloring we compared the random-rank based algorithm of Henzinger and Peng~\cite{henzinger2020constant} that takes constant amortized time per operation with the hierarchical algorithm of Bhattacharya et al.~\cite{bhattacharya2019fully}, which also takes constant time per operation, and two simple trivial algorithms;
the first of them takes $O(\Delta)$ worst-case time per operation and the second one might not terminate at all in the worst-case. Note that all but the last algorithm use $\Theta(n \Delta)$ space, which kept us evaluating them for values of $\Delta$ close to $n$.

\textbf{Related work.}
Recently there has been a large body of work on the experimental evaluation of fully dynamic algorithms, see~\cite{DBLP:journals/corr/abs-2102-11169} for a survey. The work most closely related to our work is a recent experimental evaluation of near-optimum maximum cardinality
matching algorithms by Henzinger et al.~\cite{henzinger2020dynamic}. However, the emphasis in that work was on the quality of approximation, i.e., how close the cardinality of the matching of the different algorithms is to the \emph{maximum} cardinality matching, while in this work we evaluate algorithms for \emph{maximal} matchings and our goal is to analyze the time per update operation. For comparison we also included the implementation of the 2-level variant of the algorithm by Baswana, Gupta and Sen~\cite{baswana2015fully} of~\cite{henzinger2020dynamic} into our evaluation.

\textbf{Our results.}
We perform an extensive evaluation of four dynamic $(\Delta+1)$-coloring and five dynamic maximal matching algorithms on three different types of dynamic graph instances.
Our results show that random-rank based algorithm are superior to hierarchy-based algorithms and the trivial algorithms outperform both. The reason is the caching behavior of the different algorithms. The trivial algorithms use simple data structures as they only need to access the neighbors of a given vertex sequentially. The simple data structures can be updated significantly faster then the more sophisticated data structures needed for the other algorithms. Additionally, the access to the data structures exhibits a strong spacial locality which is well supported by system caches. The more sophisticated algorithms need to be able to look up and modify information stored for any given neighbor, which requires the use of more sophisticated and slower data structures. Additionally, their accesses within these data structures show no strong spacial or temporal locality, leading to poorer cache performance.
We also compared the hierarchy-based and the random-rank based algorithms. The latter are $10 - 50\%$ faster than the former.
The main contributing factor is that the hierarchy-based algorithms spend $~20-40\%$ of their time on maintaining the hierarchy, depending on the number of levels they use, which is not necessary for random-rank based algorithms and which explains to a large part their slower performance.

\section{Preliminaries}\label{s:preliminaries}

\subsection{Basic Concepts.}

Let $G = (V, E)$ be an unweighted and undirected graph with vertex set $V$
and edge set $E \subset V \times V$. Let $n = |V|$. The degree of a vertex
$v$ is equal to the size of the neighborhood $|N(v)|$. Here, we look at
\emph{fully dynamic} graphs, the edge set $E$ is empty initially and changes via
a sequence of \emph{edge updates}, each of them is either an \emph{insertion} or
a \emph{deletion} of an edge. The vertex set $V$ remains unchanged over the whole
update sequence. Let $\Delta$ be a parameter so that the maximum degree in $G$
remains upper bounded by $\Delta$ throughout the update sequence.

A \emph{matching} $\match$ in a graph $G$ is a subset of the edge set $\match
\subseteq E$ so that each vertex $v$ has only up to one incident edge $e \in
\match$. A vertex that has an incident edge $\in \match$ is \emph{matched},
otherwise a vertex is \emph{unmatched}. A matching is \emph{maximal}, if no edge
$e = (u,v) \in E$ with $e \not \in \match$ can be added to the matching, as
either $u$ or $v$ are already matched. For a edge $e = (u,v) \in \match$, $u$ and
$v$ are the \emph{partner} $P$ of the respective other vertex, $P(u) = v$ and
$P(v) = u$. For a vertex $v \in V$ without an incident matching edge, $P(v) =
\bot$.

For an undirected graph $G = (V,E)$, an integral parameter $\lambda > 0$ and a
\emph{color palette} $\palette = \{1, \dots, \lambda\}$, let a
$\lambda$-coloring in $G$ be a function $\col: V \rightarrow \palette$ which
assigns a color $\col(v) \in \palette$ to each vertex $v \in V$. A coloring is
\emph{proper} if no two neighboring vertices in $G$ have the same color in
$\col$.

\section[Algorithms for the Fully Dynamic Coloring Problem]{Algorithms for the Fully Dynamic $(\Delta + 1)$ Coloring Problem}
\label{sec:alg-color}

In this Section we briefly outline the implemented algorithms for the
\emph{fully dynamic $(\Delta + 1)$ coloring problem}. We implement two simple
algorithms as well as the recent constant time algorithms of Henzinger and
Peng~\cite{henzinger2020constant} and of Bhattacharya et
al.~\cite{bhattacharya2019fully}. The algorithm of Henzinger and Peng uses
random vertex ranks while the algorithm of Bhattacharya et al. is based on a
hierarchical partition of the vertex set. We briefly outline the central ideas
of the algorithms, starting with the simple ones, for further details of the 
often
lengthy algorithm descriptions in~\cite{henzinger2020constant,bhattacharya2019fully,baswana2015fully,solomon2016fully} we refer the reader to the original works. Note that the algorithms are all randomized.

\subsection[Recursive Fully-Dynamic Coloring]{Recursive Fully-Dynamic $(\Delta + 1)$ Coloring.}

For a fully dynamic graph with maximum degree  $\Delta$, we can give a trivial
$(\Delta + 1)$-coloring algorithm \colrec, which initially assigns a random
color $\col(v)$ to each vertex $v \in V$. Each vertex $v$ maintains its
neighborhood $N(v)$ as a dynamic size array (\texttt{std::vector}). When
deleting an edge from this dynamic size array, we swap it with the last array
element and then remove it. As the edge set $E$ is empty initially, $\col$ is a
proper coloring. On insertion of an edge $e$, \colrec~adds the newly incident
vertices to the respective neighborhoods and checks whether their colors in
$\col$ are equal. If they are, we randomly choose a vertex $v$ incident to $e$,
and recolor $v$ with a random color $\col(v)$ from the color palette $\palette$.
We then check the neighborhood $N(v)$ and recursively recolor all neighboring
vertices whose color is equal to $\col(v)$. If the vertices have different
colors, the coloring remains proper and no action is taken. On an edge deletion,
the coloring also remains proper and thus we only update the affected
neighborhood data structures. This algorithm is, thus, very fast on edge
deletions and non-clashing insertions, but an edge insertion between two
vertices of the same color can cause expensive recursive cascading effects.

\subsection[Counting Fully-Dynamic Coloring]{Counting Fully-Dynamic $(\Delta + 1)$ Coloring.}

Another simple approach trys to avoid recursive cascading by keeping for each vertex $v \in V$ and color $\col$ a count of how many neighbors of $v$ have color $\col$.
In this \colcount~approach, we create an array
of size $\Delta + 1$ for each vertex $v$ and set all initial values set to $0$. 
This
array is used to count how many neighbors of $v$ have certain colors. We use the
same neighborhood data structure as previously. On edge deletion and
non-clashing edge insertion, \colcount~additionally updates the color counting
array. On insertion of edge $e$ with equal incident colors, we choose a random endpoint
$v$ of $e$. We then draw new random colors for $v$ until we
find a color that no neighbor of $v$ currently has. As the number of colors is
larger than the maximum degree, there is at least one such color. We then
recolor $v$ and update the color counting data structure.

\subsection{Constant-Time Algorithm of Henzinger and Peng.}

The algorithm of Henzinger and Peng~\cite{henzinger2020constant}
(\colhp) takes constant expected time per update against an \emph{oblivious adversary} and
is based on \emph{random vertex ranks}. Initially, each vertex $v \in V$ is
assigned a random rank $r(v) \in [0,1]$ and a random color $\xi(v) \in
\palette$. For each vertex $v \in V$, the set of neighbors is partitioned into
$H_v$, the neighbors with rank $> r(v)$ and $L_v$, the neighbors with rank $<
r(v)$. We implement $L_v$ and $H_v$ as hash sets. The algorithm aims to maintain
a proper $(\Delta + 1)$-coloring. Edge deletions or non-clashing edge insertions
do not lead to a violation of a proper coloring, only the insertion of edge $e =
(u,v)$ with $\xi(u) = \xi(v)$ triggers a recolor of vertex $v$, where $r(v) >
r(u)$. \colhp~samples in a carefully designed procedure a random new color 
 that is either not used by any
neighbor of $v$ or only by a single neighbor $w$ with $w \in L_v$, and assigns this
color to $v$. If such a ``conflicting'' neighbor $w$
exists, it triggers a recoloring of $w$. 

\subsection{Constant-Time Algorithm of Bhattacharya et al.}

The algorithm of Bhattacharya et al.~\cite{bhattacharya2019fully}
(\colbgkls) 
takes constant expected time per update against an \emph{oblivious adversary} and
partitions the vertex set into  a \emph{hierarchy} of $\log{\Delta}$ levels, where vertices
rise in level if they have many neighbors of lower rank. Similar to the other
algorithms, edge deletions or non-clashing edge insertions do not lead to a
violation of a proper coloring, only clashing edge insertions trigger a recolor
of the vertex that was more recently recolored. When a vertex $v \in V$ is
recolored, \colbgkls~picks a new color for $v$ with level $l(v)$ in the
following way: if $v$ has fewer than $3^{l(v) + 2}$ neighbors of lower or equal
level, we set $l(v) = -1$ and pick the new color $\xi(v)$ from the set of colors
that no neighbor of $v$ has. Otherwise, the level of $v$ is raised to the lowest
level $l(v)$, so that $v$ has fewer than $3^{l(v) + 2}$ neighbors of lower or
equal level and then pick a random new color that is either not used by any
neighbor of $v$ or by exactly one neighbor $w$, where the level of $w$ is lower
than the level of $v$. If there is such a neighbor $w$, the recoloring algorithm
is then called on $w$.

\section{Algorithms for the Fully Dynamic Maximal Matching Problem}
\label{ss:alg-match}

We now briefly outline the implemented algorithms for the fully dynamic maximal
matching problem. We implement a trivial maximal matching algorithm, as well as
the level set partitioning algorithms of Solomon~\cite{solomon2016fully} and
Baswana et al.~\cite{baswana2015fully}; and the random rank algorithm of
Behnezhad et al.~\cite{behnezhad2019fully}. For the algorithm of Baswana et
al., we use an implementation of Henzinger et al.~\cite{henzinger2020dynamic}.
Additionally, we introduce a variation of the algorithm of Behnezhad et al. that
does not guarantee a polylogarithmic running time but performs much faster in
practice.

\subsection{Trivial Fully Dynamic Maximal Matching.}

A matching is maximal if for each edge $e=(u,v) \in E$ at least one of the
vertices $u$ and $v$ is matched. Initially $E$ is empty and therefore the empty
matching is maximal. In the trivial fully dynamic maximal matching algorithm
\matchtriv, each vertex $v \in V$ maintains its neighborhood $N(v)$ as a dynamic size array (\texttt{std::vector}). For the insertion of edge $e = (u,v)$, we check whether $u$ and $v$ are
both unmatched and if they are, we add $e$ to the matching. Otherwise we leave
the matching unaltered, as the new edge already has an incident edge in the
matching. On deletion of a matching edge $e = (u,v)$, we have to remove $e$ from
the matching (as it no longer exists) and, thus, edges incident to $u$ or $v$
can be uncovered. In this case \matchtriv\ iterates over the neighborhoods of the
newly unmatched vertices $u$ and $v$ and checks whether new matching partners
can be found.

\subsection{Algorithms of Solomon and Baswana et al.}

The algorithm of Baswana et al.~\cite{baswana2015fully} (\matchbgs) and later
the algorithm of Solomon~\cite{solomon2016fully} (\matchsolomon) takes $O(\log
n)$, resp.~ $O(1)$ expected time per update against an \emph{oblivious
adversary} and both partition the vertex set into a \emph{hierarchy} of
$\log{n}$ levels, where intuitively a vertex rises in levels if it has
many neighbors in lower levels. A vertex $v$ is in level $l(v) = -1$, if it is
unmatched and in a level $l(v) \geq 0$ if it is matched. Each vertex maintains a
set of neighbors $O_v$ of neighbors of lower level as well as a set of
higher-level neighbors $I_v[l]$ for each level $l(v) \leq l < \log{n}$. For
insertion of edge $e = (u,v)$, the algorithms first add the edge to the
respective data structures and check whether both $u$ and $v$ are unmatched
(i.e. are in level $-1$) and if they both are, they add $e$ to the matching and
elevate both incident vertices to level $0$. If at least one of $u$ and $v$ are
already matched, no further action is taken. For deletion of edge $e = (u,v)$,
the algorithms remove the edge from the respective data structures. If $e$ is
not a matching edge, the vertex levels remain constant, otherwise $u$ and $v$
are now unmatched vertices and the algorithms aim to find new matching partners:
a vertex $v$ with few vertices in $O_v$ checks whether there exists an
unmatched $w \in O_v$, and if it exists, adds $(v,w)$ to the matching and sets
the levels of $v$ and $w$ to $0$. If no such $w$ exists, $v$ remains unmatched.
For a vertex $v$ with many neighbors in $O_v$, they instead pick a random
(potentially matched) neighbor $w \in O_v$ and adds $(v,w)$ to the matching. If
a matching edge $(w,w')$ exists, it is removed from the matching and the
algorithms use the same algorithm to try and find a new matching partner for the
now unmatched vertex $w'$. The algorithm of Baswana et
al.~\cite{baswana2015fully} explicitly maintains the vertex levels and achieves
an amortized runtime of $\Oh{\log n}$, the algorithm of
Solomon~\cite{solomon2016fully} achieved constant running time by lazily
maintaining the vertex levels and therefore expensive waves of vertex level
updates. We implement the neighborhood sets $O_v$ and $I_v[l]$ as hash sets.

\subsection{Algorithm of Behnezhad et al.}

Behnezhad et al.~\cite{behnezhad2019fully} give a random rank based algorithm
(\matchbbhss) for the fully dynamic maximal independent set problem and extend
that algorithm to the fully dynamic maximal matching problem. The resulting
algorithm takes $\Oh{\log^2{\Delta}\log^2{n}}$ expected time per update against
an \emph{oblivious adversary} and assignes during preprocessing a \emph{random
rank} to each edge, which induces a random order on the edges. It then maintains
the \emph{lexicographically first maximal matching} over this order of the
edges. 

A matching $\match$ is the lexicographically first maximal matching (LFMM) of a graph
$G=(V,E)$ with a ranking $\pi: E \rightarrow [0,1]$ over the edges if it is
equal to the matching created by the following greedy
algorithm~\cite{behnezhad2019exponentially, behnezhad2019fully} that finds a
maximal matching. Initially, all edges are white. We iteratively pick the white
edge $e$ with minimum rank $\pi(e)$, add it to the matching, and then make $e$ and all
its incident white edges black. We call this matching  $LFMM(G,\pi)$. In this
greedy algorithm, every edge is either added to the matching or turned black by an
incident edge of smaller rank $\pi$. In the following we say an edge $e_1 \in M$
\emph{covers} an edge $e_2 \not \in M$, if they are incident and $\pi(e_1) <
\pi(e_2)$. An edge $e \not \in M$ that does not have any matching edge which
covers $e$ is called \emph{uncovered}.

The algorithm of Behnezhad et al.~\cite{behnezhad2019fully} draws for each
inserted edge $e$ a random rank $\pi(e)$ in $[0,1]$. For each edge $e$ it also
maintains its eliminator rank $k(e) \leq \pi(e)$, where $k(e)$ is the rank
of the edge incident to $e$ that is in the matching. For a matching edge $e \in
M$, $r(e) = k(e)$, otherwise $k(e) < \pi(e)$. Each vertex $v$ maintains a
self-balancing binary search tree of its incident edges ordered
by their eliminator rank. This is necessary as it needs to be able to find all incident edges below a given eliminator rank. Each vertex also maintains its vertex rank $k(v)$ which is
equal to the  edge rank  of the incident matching edge if one exists and $\infty$ otherwise.
For the insertion of edge $e = (u,v)$, we add $e$ to the respective data structures
and check whether the edge should be part of the LFMM (\ie $\pi(e) < k(u)$ and
$\pi(e) < k(v)$). If it is, we add $e$ to the matching and remove the matching
edges incident to $u$ and $v$. The algorithm then updates the eliminator ranks
of all affected edges in the search tree data structures and finds new partners
for the newly unmatched vertices. For the deletion of edge $e = (u,v)$, the
algorithm removes $e$ from the respective data structures and if $e$ was in the
LFMM, checks all incident edges $e'$ with $k(e') \geq \pi(e)$ to find new
matching partners for $u$ and $v$. Similar to the edge insertion, the algorithm
updates the search tree data structure and finds new matching partners for all
vertices that become unmatched over the course of the algorithm. The algorithm has amortized expected $O(\log^4 n)$ update time.

\subsection{New Random-Rank Algorithm for Maximal Matching.}

The algorithm of Behnezhad et al.~\cite{behnezhad2019fully} performs many updates in 
search tree data structures. As these updates are very expensive in practice, we give an alternative random-rank algorithm
(\matchrr) based on their idea that does not need to perform these expensive
updates. This new algorithm also maintains the lexicographically first maximal
matching on a dynamic graph and is much faster in practice despite having a
worse asymptotic running time bound. In the following we give a detailed
description of this algorithm.

The data structures used in \matchrr~are closely related to the algorithm of
Behnezhad et al.~\cite{behnezhad2019fully}. For each inserted edge $e$,
\matchrr~draws a random edge rank $\pi(e) \in [0,1]$. Each vertex maintains
$k(v)$, the rank of the incident matching edge if it exists - if $v$ is unmatched,
$k(v) = \infty$, the set of neighbors $N(v)$ and the matched neighbor $P(v)$,
for an unmatched vertex $v$, $P(v) = -1$. In contrast to the algorithm of
Behnezhad et al., the neighbors are ordered by their ranks and \emph{not} by their
eliminator rank and, thus, do not need to be updated when the matching changes.
We next give a detailed description of the algorithm.

\paragraph{Algorithm description.}

Similar to the algorithm of Behnezhad et al.\cite{behnezhad2019fully}, we aim to
maintain the lexicographically first maximal matching in $G$ according to a
random edge ranking $\pi$. The set of vertices is static, the edge set is
initially empty. On edge insertion, we call Algorithm~\ref{alg:insert}
(pseudocode in Appendix~\ref{app:pseudocode}) and on edge deletion, we call
Algorithm~\ref{alg:delete} (pseudocode in Appendix~\ref{app:pseudocode}), which
each use Algorithm~\ref{alg:partner} (findNewPartners()) to find new partners
for vertices whose previous partner was matched with a new matching partner in
order to maintain the invariant that all non-matching edges are covered.

\begin{algorithm}[t!]
    \caption{findNewPartners(): Find new partners for unmatched vertices, cover all uncovered edges \label{alg:partner}}
    \begin{algorithmic}[1]
        \INPUT $S \leftarrow$ vertex priority queue

        \While{$S$ is not empty}
        \State $v,r_v \leftarrow$ remove lowest rank vertex in S
        \For{$w \in N(v)$ with $r_v < \pi(v,w) < k(v)$}
            \If{$\pi(v,w) < k(w)$} \Comment{Add $(v,w)$ to $\match$}
                \If{$k(w) < \infty$}
                    \State $x \leftarrow P(w)$
                    \State S.insert($x, k(w)$)
                    \State $P(x) \leftarrow -1$
                    \State $k(x) \leftarrow \infty$
                \EndIf
                \If{$k(v) < \infty$}
                    \State $x \leftarrow P(v)$
                    \State S.insert($x, k(v)$)
                    \State $P(x) \leftarrow -1$
                    \State $k(x) \leftarrow \infty$
                \EndIf
                \State $P(v) \leftarrow w$
                \State $P(w) \leftarrow v$
                \State $k(v) \leftarrow r_e$
                \State $k(w) \leftarrow r_e$
                \State \textbf{return}
            \EndIf
        \EndFor 
        \EndWhile       
    \end{algorithmic}
\end{algorithm}

On insertion of edge $e = (u,v)$, Algorithm~\ref{alg:insert} first draws a
random rank $\pi(e)$ and then inserts the edge in the appropriate data
structures. We make sure $\pi(e)$ is unique by re-drawing the random rank if
another edge already has equal rank. If $\pi(e)$ is smaller than the minimum of
$k(v)$ and $k(u)$, $e$ is added into the matching. If $u$ or $v$ had previous
matching partners, they are removed from the matching and added into a priority
queue $S$ with priority equal to the rank of their previous matching edge $k(u)$
resp. $k(v)$. We then call Algorithm~\ref{alg:partner} to make sure all edges
are covered again, i.e., all vertices in $S$ are processed. 

Algorithm~\ref{alg:partner} takes vertex $v$  of $S$ with lowest
priority, called $r_v$, and checks whether there exists an incident edge $e = (v,w)$ with
priority $r_v < \pi(e) < k(v)$ and $\pi(e) < k(w)$ that is added to LFMM. 
Note that $r_v$ is the rank of the last edge with which $v$ was matched at the time that it was added to $S$.
If $v$ 
(and $w$ analogously) is currently matched, i.e. $P(v) \neq -1$, we add $P(v)$
to $S$ with priority $\pi((v,P(v)))$, as all edges incident to $v$ with priority $>
\pi((v,P(v)))$ are now potentially uncovered. Note that even though vertex $v$
was unmatched when it was added to priority queue $S$, it is possible that an
 edge incident to $v$ was added to the matching in the meantime and, thus, $v$ is
currently matched. We repeat this process until priority queue $S$ is empty.

On deletion of an edge $e = (u,v)$, we delete it from the neighborhoods of $u$
and $v$ and check whether it is a matching edge. If it is, we set both $u$ and
$v$ to unmatched and add them to $S$ with a starting rank of $\pi(e)$ to check
every edge with a rank of $> \pi(e)$. Similar to the insertion algorithm, we
then call Algorithm~\ref{alg:partner}.

In the following we show that this algorithm terminates and correctly maintains
a LFMM on edge insertion and deletion. We first prove Lemma~\ref{lem:killed},
which says that the LFMM is the only matching, in which each non-matching edge
is covered by an incident matching edge of smaller rank.

\begin{lemma} \label{lem:killed} A matching $\match$ is $LFMM(G,\pi)$ if and
only if for each edge $e = (u,v) \in E$ either $e \in \match$ or $\exists w \in
N(u) \cup N(v)$ s.t. $(u,w) \in \match$ with $\pi(u,w) < \pi(e)$ or $(v,w) \in
M$ with $\pi(v,w) < \pi(e)$. (Proof in Appendix~\ref{app:proofs})
\end{lemma}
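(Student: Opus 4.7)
The plan is to prove both directions of the characterization. The forward direction is essentially a restatement of the greedy procedure defining $LFMM(G,\pi)$, while the backward direction requires showing that the covering property uniquely determines a matching, which I will establish via a minimal-counterexample argument on the symmetric difference with the actual LFMM.

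For the forward direction, assume $\match = LFMM(G,\pi)$ and let $e \in E \setminus \match$. In the greedy procedure, $e$ is eventually turned black; since $e \notin \match$, it was blackened at the moment some incident edge $e'$ was added to $\match$. Because the greedy algorithm processes white edges in increasing order of rank, we have $\pi(e') < \pi(e)$, and $e'$ shares an endpoint with $e$. Writing $e' = (u,w)$ or $(v,w)$ gives exactly the covering condition in the statement.

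For the backward direction, suppose $\match$ satisfies the covering property, and let $\match^* = LFMM(G,\pi)$. I want to show $\match = \match^*$. Assume for contradiction $\match \neq \match^*$ and let $e$ be the edge of minimum rank in the symmetric difference $\match \triangle \match^*$. I split into two cases. If $e \in \match^* \setminus \match$, then since $e \notin \match$ the hypothesis on $\match$ gives an incident edge $e' \in \match$ with $\pi(e') < \pi(e)$; this $e'$ cannot also lie in $\match^*$ (otherwise $\match^*$ would have two incident edges at a common endpoint), so $e' \in \match \triangle \match^*$, contradicting the minimality of $e$. If $e \in \match \setminus \match^*$, then applying the forward direction to $\match^*$ yields an incident $e' \in \match^*$ with $\pi(e') < \pi(e)$, and the same reasoning (now $e'$ cannot be in $\match$ since $\match$ already contains the incident edge $e$) places $e' \in \match \triangle \match^*$, again contradicting minimality.

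The main obstacle, as far as there is one, is organizing the symmetric-difference argument cleanly so that the two cases use the covering hypothesis on the correct matching: one uses it on $\match$ (by assumption) and the other uses it on $\match^*$ (justified by the already-proved forward direction). Aside from this small bookkeeping, the proof is short and structural; no probabilistic or algorithmic details about how ranks are drawn are needed, only the fact that $\pi$ induces a total order on the edges so that ``minimum in the symmetric difference'' is well-defined.
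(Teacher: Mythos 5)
Your proof is correct and follows essentially the same route as the paper: the forward direction reads off the greedy construction, and the backward direction examines the lowest-rank edge on which the candidate matching and $LFMM(G,\pi)$ disagree and derives a contradiction in two symmetric cases. The only cosmetic difference is that the paper phrases the contradiction as ``the other matching is not a matching / has an uncovered edge'' using agreement on all lower-rank edges, while you phrase it as a violation of the minimality of the chosen edge in the symmetric difference; the underlying argument is the same.
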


\begin{claim} 
    \label{claim:onlylarger}
    In Algorithm~\ref{alg:partner}, when vertex $v$ is removed from $S$ with
    rank $r_v$, then any new vertex $w$ that is added to $S$ during this
    iteration of Algorithm~\ref{alg:partner} is added with rank $> r_v$. (Proof in Appendix~\ref{app:proofs})
\end{claim}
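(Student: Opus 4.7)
The plan is to read off the claim directly from the guarded insertions in Algorithm~\ref{alg:partner}. Fix the iteration of the outer while-loop in which the pair $(v,r_v)$ was just removed from $S$. The only instructions in Algorithm~\ref{alg:partner} that insert a new vertex into $S$ are the two \texttt{S.insert}$(\cdot,\cdot)$ calls inside the \texttt{if} block guarded by the condition $\pi(v,w)<k(w)$, which itself sits inside the \texttt{for}-loop that ranges over neighbors $w\in N(v)$ satisfying $r_v<\pi(v,w)<k(v)$. So every insertion that happens during this iteration is of the form $P(w)$ inserted with priority $k(w)$, or $P(v)$ inserted with priority $k(v)$, and these insertions take place only for a neighbor $w$ of $v$ for which both guard conditions $r_v<\pi(v,w)<k(v)$ and $\pi(v,w)<k(w)$ hold.

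From the first guard we immediately obtain $k(v)>\pi(v,w)>r_v$, so whenever $P(v)$ is inserted its priority $k(v)$ strictly exceeds $r_v$. Combining the first guard with $\pi(v,w)<k(w)$ yields $k(w)>\pi(v,w)>r_v$, so whenever $P(w)$ is inserted its priority $k(w)$ strictly exceeds $r_v$ as well. Hence every vertex added to $S$ during the iteration that removed $(v,r_v)$ is added with a rank strictly greater than $r_v$, establishing the claim.

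There is essentially no obstacle here: the claim is a straightforward syntactic consequence of the two conditions that gate the insertions into $S$ in Algorithm~\ref{alg:partner}. The only mild subtlety is to make sure we have accounted for \emph{all} insertion sites during the current iteration — which is why the first paragraph localizes them — and to note that the \texttt{return} statement after the insertions guarantees that no further insertions occur in this iteration, so the two guard-derived inequalities above cover every newly added element.
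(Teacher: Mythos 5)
Your proof is correct and follows essentially the same route as the paper's: you localize the only insertion sites to the two guarded \texttt{S.insert} calls and chain the guards $r_v<\pi(v,w)<k(v)$ and $\pi(v,w)<k(w)$ to conclude that the inserted priorities $k(v)$ and $k(w)$ both exceed $r_v$. Your explicit accounting of all insertion sites and of the \textbf{return} statement is a slightly more careful write-up of the same argument (and avoids the paper's small typo writing $\pi(v)$ where $\pi(v,w)$ is meant).
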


\begin{claim}
    \label{claim:non-decreasing}
    Algorithm~\ref{alg:partner} checks vertices in non-decreasing priority
    order. (Proof in Appendix~\ref{app:proofs})
\end{claim}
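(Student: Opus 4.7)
The plan is a short induction on the number of extractions from $S$, leaning almost entirely on Claim~\ref{claim:onlylarger}. Let $v_1, v_2, \dots$ be the vertices removed from $S$ in the order they are extracted by the while loop of Algorithm~\ref{alg:partner}, with respective extraction priorities $r_{v_1}, r_{v_2}, \dots$. The goal is to show $r_{v_i} \le r_{v_{i+1}}$ for every $i$.

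First I would establish the inductive invariant: immediately after the $i$th extraction and the processing of $v_i$, every vertex still sitting in $S$ has priority at least $r_{v_i}$. The base case ($i=1$) is immediate because $v_1$ is removed as the minimum-priority element of $S$, so every remaining element already has priority $\ge r_{v_1}$, and by Claim~\ref{claim:onlylarger} every vertex inserted into $S$ during the processing of $v_1$ is inserted with priority strictly greater than $r_{v_1}$. Hence both the surviving old entries and the newly inserted entries have priority $\ge r_{v_1}$, which establishes the invariant after iteration $1$.

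For the inductive step, assume the invariant holds after iteration $i$, so every element of $S$ has priority $\ge r_{v_i}$. The next extraction picks the current minimum, so $r_{v_{i+1}} \ge r_{v_i}$, which gives the non-decreasing property for this pair. To restore the invariant for iteration $i+1$, note that after removing $v_{i+1}$ the remaining ``old'' elements still have priority $\ge r_{v_{i+1}}$ by the min-extraction property, while Claim~\ref{claim:onlylarger} applied to the processing of $v_{i+1}$ guarantees that any vertex inserted during this iteration is inserted with priority $> r_{v_{i+1}}$. Combining these, every element of $S$ after iteration $i+1$ has priority $\ge r_{v_{i+1}}$, completing the induction and the claim.

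I do not expect a serious obstacle here: the two facts used are that a min-priority queue always returns an element no smaller than anything remaining, and that newly inserted elements in a given iteration exceed the extracted priority, which is exactly Claim~\ref{claim:onlylarger}. The only subtlety worth spelling out is that a vertex $w$ added earlier in the run could, in principle, still be present when $v_{i+1}$ is removed, so the argument must handle both ``leftover'' and ``freshly inserted'' elements; the invariant above is precisely designed to cover both cases in one statement.
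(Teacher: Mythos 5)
Your argument is correct and follows essentially the same route as the paper's proof: combine the fact that a min-priority queue always extracts an element no larger than anything remaining with Claim~\ref{claim:onlylarger} to bound newly inserted priorities. Your explicit induction with the invariant covering both leftover and freshly inserted entries just spells out more carefully the chaining that the paper's shorter proof leaves implicit.
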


\begin{claim} 
    \label{claim:terminate}
    Algorithm~\ref{alg:partner} terminates after at most $m$ iterations of the while-loop. (Proof in Appendix~\ref{app:proofs})
\end{claim}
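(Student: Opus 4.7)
The plan is to combine the two preceding claims to get a strict monotonicity on the sequence of priorities processed by the while-loop, and then argue that this sequence can only take values from a finite set of cardinality at most $m$.

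First, I would let $v_1, v_2, \ldots$ be the sequence of vertices removed from $S$ in successive iterations of the while-loop and let $r_1, r_2, \ldots$ be their associated priorities. By Claim~\ref{claim:non-decreasing} we have $r_1 \leq r_2 \leq \cdots$. I would then strengthen this to a \emph{strict} inequality as follows: when vertex $v_i$ is removed with priority $r_i$, any vertex that is either still in $S$ or added to $S$ during this iteration has priority strictly greater than $r_i$. For vertices added in the current iteration this is Claim~\ref{claim:onlylarger}; for vertices already in $S$ when $v_i$ is removed, $v_i$ was the minimum-priority element, and since priorities correspond to unique edge ranks (the algorithm enforces uniqueness of $\pi$) any such vertex must have had priority strictly greater than $r_i$. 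Hence $r_1 < r_2 < \cdots$.

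Next, I would argue that every priority that ever appears in $S$ equals $\pi(e)$ for some edge $e$ that belongs to the edge set at the moment the outer update is processed. Inspecting the three places where vertices are inserted into $S$: (i) at the start of the insertion routine on $e = (u,v)$ with priority $k(u)$ or $k(v)$ — these are ranks of matching edges incident to $u$ or $v$ before the update; (ii) at the start of the deletion routine with priority $\pi(e)$ of the deleted edge; and (iii) inside Algorithm~\ref{alg:partner} with priority $k(w)$ or $k(v)$, which are again ranks of matching edges present just before the line executes. In every case, the priority value equals $\pi(e')$ for some edge $e'$ that was in $E$ at the start of the update, so the set of possible priorities has size at most $m$ (counting the edge being inserted/deleted contributes at most one more, so at most $m+1$, which is still $O(m)$; for the cleaner bound $m$ one uses that in the deletion case the deleted edge has already been removed from $E$ before the call, and in the insertion case the priorities used come from previously existing matching edges).

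Since $r_1 < r_2 < \cdots$ and each $r_i$ lies in a set of size at most $m$, the while-loop can execute at most $m$ iterations, proving the claim. The only subtlety I anticipate is justifying the strict monotonicity cleanly — in particular that no two iterations can share a priority — which relies on edge ranks being unique and on Claim~\ref{claim:onlylarger}; everything else is bookkeeping about which line inserts which priority into $S$.
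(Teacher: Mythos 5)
Your proof takes a genuinely different route from the paper's (the paper argues that each while-loop iteration permanently \emph{settles} at least one edge, i.e.\ once the processed priority has passed a rank, edges of smaller rank are never again added to or removed from $\match$ during this update, so at most $m$ iterations occur), but the step your route relies on has a genuine gap: the strict monotonicity $r_1 < r_2 < \cdots$ is false. In Algorithm~\ref{alg:delete}, when a matching edge $e=(u,v)$ is deleted, \emph{both} endpoints $u$ and $v$ are inserted into $S$ with the \emph{same} priority $\pi(e)$, so the first two iterations of the while-loop in Algorithm~\ref{alg:partner} process equal priorities. Your justification --- that uniqueness of edge ranks forces any vertex remaining in $S$ to have priority strictly greater than $r_i$ --- does not hold here: uniqueness of $\pi$ prevents two distinct edges from sharing a rank, but it does not prevent two vertices from being queued with the rank of the \emph{same} edge. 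So Claim~\ref{claim:non-decreasing} and Claim~\ref{claim:onlylarger} only give you $r_1 \le r_2 \le \cdots$ with possible ties, and the tie is exactly the case your counting argument cannot absorb.

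There is also a smaller accounting slip: in the deletion case the priority $\pi(e)$ is the rank of an edge that has already been removed from $E$ before Algorithm~\ref{alg:partner} is called, so it is not among the ranks of the at most $m$ current edges; together with the tie above, your argument as written yields a bound like $m+2$ rather than $m$. Both issues are repairable --- e.g.\ observe that a priority value can be shared only by the two endpoints of the single deleted edge and handle that pair separately, or drop the monotonicity route altogether and argue, as the paper does, that every iteration settles at least one edge (which also cleanly handles the fact that a vertex may be inserted into $S$ several times with different priorities) --- but as stated the strict-increase claim carries the whole proof and does not hold.
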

Each iteration takes time $O(n)$ leading to a $O(nm)$ worst-case time per operation.
To show the correctness of the algorithm we will show that before each call to Algorithm~\ref{alg:partner}, the following two invariants hold:
\begin{description}
\item[I1] At least one endpoint $v$ of each uncovered edge $e$ is in $S$ (or currently
processed in Algorithm~\ref{alg:partner}) and $\pi(e) > r_v$,
\item [I2] All edges of rank at most the lowest rank in $S$ are covered.
\end{description}
First, however, we show the following lemma:

\begin{lemma}\label{lem:partner}
If I1 and I2 hold at the beginning of Algorithm~\ref{alg:partner} then at its termination  the matching is a lexicographically first maximal matching.
\end{lemma}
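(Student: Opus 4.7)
The plan is to reduce the claim to a property about ``uncovered'' edges via Lemma~\ref{lem:killed}, and then show by induction on iterations of the while-loop of Algorithm~\ref{alg:partner} that the invariants I1 and I2 are preserved. Since Algorithm~\ref{alg:partner} terminates by Claim~\ref{claim:terminate}, when $S$ finally becomes empty I1 implies that no non-matching edge is uncovered, and $\match$ is still a valid matching (because a vertex's former partner is always removed via $P(\cdot)$ before a new matching edge is attached). Lemma~\ref{lem:killed} then gives $\match = LFMM(G,\pi)$ at termination.

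The main work is the inductive step. Let $v$ be the vertex popped from $S$ with rank $r_v$; by Claim~\ref{claim:non-decreasing}, $r_v$ is the current minimum priority. Since $N(v)$ is stored sorted by edge rank, I would walk the edges $(v,w)$ with $r_v < \pi(v,w) < k(v)$ in increasing order and split into two cases. Case~A: no such edge satisfies $\pi(v,w) < k(w)$; then for every such $w$ we have $k(w) \leq \pi(v,w)$, so $(v,w)$ is covered by $w$'s existing matching edge. Case~B: we find $e^* = (v,w)$ with $\pi(e^*) < k(w)$ and add it to $\match$. Every edge $(v,w')$ scanned before $e^*$ was skipped because $k(w') \leq \pi(v,w')$, so it is still covered by $w'$'s matching edge; every $(v,w')$ with $\pi(v,w') > \pi(e^*)$ is now covered by $e^*$, and symmetrically for edges incident to $w$. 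Edges incident to $v$ with rank $\leq r_v$ were covered at the start of the iteration by I2 and no matching edge of rank $\leq r_v$ is modified (only edges with $k$-values $k(v)$ and $k(w)$, both $> r_v$, are swapped out), so they remain covered.

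The main obstacle will be re-establishing I1 for edges that become newly uncovered when $e^*$ displaces the former partners $x = P(v)$ and $x' = P(w)$. An edge $(x,y)$ that was previously covered only by $(x,v)$ needs a new certificate. The key observation is that after the removal of $(x,v)$, the algorithm inserts $x$ into $S$ with priority $k(v)$; any such $(x,y)$ that becomes uncovered must have $\pi(x,y) > k(v)$ (since it was covered by the removed edge of rank $k(v)$), and so $x \in S$ with rank strictly smaller than $\pi(x,y)$, which is exactly I1. The symmetric argument with $x'$ and $k(w)$ handles edges incident to $w$. Finally, by Claim~\ref{claim:onlylarger} every rank inserted into $S$ during this iteration exceeds $r_v$, so the new minimum of $S$ is some $r' > r_v$, and I2 at rank $r'$ follows from I2 at rank $r_v$ together with the coverage we established for all touched edges of rank in $(r_v, r']$. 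When $S$ finally empties, I1 yields no uncovered non-matching edge and Lemma~\ref{lem:killed} completes the proof.
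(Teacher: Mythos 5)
Your proof is correct and follows essentially the same route as the paper's: an iteration-by-iteration argument that the while-loop of Algorithm~\ref{alg:partner} preserves I1 and I2 (the displaced partners re-entering $S$ with the rank of their removed matching edge being the key point), followed by an appeal to Claim~\ref{claim:terminate} and Lemma~\ref{lem:killed} once $S$ is empty. Your write-up is somewhat more detailed than the paper's (explicit case split and treatment of edges scanned before $e^*$), but the underlying argument is the same.
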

\begin{proof}
Assume I1 and I2 hold when Algorithm~\ref{alg:partner} is called.
The algorithm removes repeatedly the lowest priority vertex $v$, say with
priority $r_v$, from $S$ and check in increasing rank order (starting from $r_v$) whether the edges incident to $v$ are covered. All uncovered edges incident to $v$ must have rank
$> r_v$ by our invariant.
If an edge $e = (v,w)$  with $\pi(e) > r_v$ is not covered
when it is checked, we add $e$ to
the matching, which immediately covers all remaining unchecked edges incident $v$.
Thus, all edges incident to $v$ are now covered.
Then we evict the current matching edges incident to $v$ and $w$ from
$\match$. Note that both must have rank $>\pi(e)$, but both are now covered. As their partners might now have incident uncovered edges of rank $>
\pi(x,P(x)) > \pi(e)$ and $> \pi(y,P(y)) > \pi(e)$, we add $P(x)$ and $P(y)$ to the priority queue
$S$ with priority $\pi(x, P(x))$, resp.~$\pi(y,P(y))$. This maintains Invariant I1.
It also maintain Invariant I2 as (a) all new potentially uncovered edges are
incident to either $P(x)$ or $P(y)$ and have rank $>\min\{\pi(x, P(x)),\pi(y,P(y))\}$, which is larger than the minimum rank in $S$, (b) all edges incident to $v$ are now
covered, and (c) by the Invariant 1 for all other uncovered edges $e'$ there exists an endpoint $u$ in $S$ with $\pi(e') > r_u$. 
As the queue $S$ is guaranteed to be empty after the execution of
Algorithm~\ref{alg:partner} (Claim~\ref{claim:terminate}), it follows that all
edges are covered, which implies by Lemma~\ref{lem:killed} that the matching is a 
lexicographic first maximal matching.
\end{proof}

\begin{lemma} \label{lem:insert-correct} Given the matching $\match =
LFMM(G,\pi)$ and the insertion of an edge $e = (u,v)$,
Algorithm~\ref{alg:insert} maintains the lexicographically first maximum
matching $LFMM(G+e,\pi)$. (Proof in Appendix~\ref{app:proofs})
\end{lemma}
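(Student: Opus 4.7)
The plan is to reduce the claim to Lemma~\ref{lem:partner}: it suffices to show that the invariants I1 and I2 hold just before Algorithm~\ref{alg:partner} is invoked from Algorithm~\ref{alg:insert}, because Lemma~\ref{lem:partner} then guarantees that the matching upon termination is $LFMM(G+e,\pi)$. The argument splits naturally into two cases based on whether Algorithm~\ref{alg:insert} decides to add $e = (u,v)$ to $\match$.

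The easy case is $\pi(e) \geq \min(k(u), k(v))$. Then the matching is left untouched, $S$ is empty, and $e$ itself is covered in $G + e$ by the incident matching edge attaining $\min(k(u), k(v))$. Every other edge keeps the covering witness it had in $LFMM(G,\pi)$, so by Lemma~\ref{lem:killed} the matching is already LFMM and both invariants hold trivially.

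The substantive case is $\pi(e) < \min(k(u), k(v))$, where $e$ is added to $\match$, the old matching edges $(u, P(u))$ and $(v, P(v))$ (if they existed) are evicted, and $x := P(u)$ and $y := P(v)$ are placed into $S$ with priorities $k(u)$ and $k(v)$ respectively. To verify I1, I would classify each potentially uncovered edge $e'$ by its endpoints: edges incident to $u$ (resp.\ $v$) with $\pi(e') < \pi(e)$ were already covered at the far endpoint in $LFMM(G,\pi)$ by Lemma~\ref{lem:killed} (since their rank is below $k(u)$, they could not have been covered at $u$), and that far-endpoint witness still lies in $\match$; edges incident to $u$ or $v$ with $\pi(e') > \pi(e)$ are covered by $e$; and edges incident to $x$ (resp.\ $y$) but not to $u$ or $v$ are problematic only if they were previously covered by the evicted edge $(u,x)$, in which case they necessarily have rank $> k(u)$, matching exactly the priority of $x$ in $S$. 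To verify I2, I would note that $\min(S) = \min(k(u), k(v))$ and that any edge $e''$ with $\pi(e'') \leq \min(S)$ that was previously covered used a witness $m \in \match$ with $\pi(m) < \pi(e'') \leq \min(k(u), k(v))$; such an $m$ is neither $(u,x)$ nor $(v,y)$, so $m$ remains matching and $e''$ remains covered. The only previously matching edges that were evicted are $(u,x)$ and $(v,y)$ themselves, both of which are now covered by $e$.

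The main obstacle is the bookkeeping in the previous paragraph, in particular ensuring that evicting $(u,x)$ and $(v,y)$ does not destroy a cover of any edge whose rank is already at or below the new minimum of $S$; this is where Lemma~\ref{lem:killed} is essential, since it tells us exactly where to look for the covering witnesses. Once I1 and I2 are confirmed, Lemma~\ref{lem:partner} concludes the proof.
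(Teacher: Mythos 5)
Your proposal is correct and follows essentially the same route as the paper's proof: verify that invariants I1 and I2 hold for the priority queue built by Algorithm~\ref{alg:insert} (evicted partners $P(u)$, $P(v)$ inserted with the ranks of their former matching edges), then invoke Lemma~\ref{lem:partner} to conclude the matching is $LFMM(G+e,\pi)$. Your case analysis is somewhat more explicit than the paper's (which treats the non-inserting case as immediate and argues the covered/uncovered classification more tersely), but the underlying argument is the same.
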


\begin{lemma} \label{lem:delete-correct} Given the matching $\match =
     LFMM(G,\pi)$ and the deletion of an edge $e = (u,v)$,
     Algorithm~\ref{alg:delete} maintains the lexicographically first maximum
     matching $LFMM(G-e,\pi)$. (Proof in Appendix~\ref{app:proofs})
\end{lemma}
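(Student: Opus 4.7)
The plan is to follow the same strategy as for Lemma~\ref{lem:insert-correct}: reduce to Lemma~\ref{lem:partner} by verifying that invariants I1 and I2 hold at the moment Algorithm~\ref{alg:partner} is invoked from Algorithm~\ref{alg:delete}. I will split the argument into two cases depending on whether the deleted edge $e=(u,v)$ belonged to $\match$ or not.

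First, I would dispose of the easy case $e \notin \match$. Since $\match = LFMM(G,\pi)$, by Lemma~\ref{lem:killed} every edge $e'\ne e$ of $G$ is either in $\match$ or is covered by an incident matching edge of smaller rank. Removing $e$ from $G$ affects the coverage of no other edge because $e \notin \match$. Hence the characterization of Lemma~\ref{lem:killed} still holds for $\match$ in $G-e$, so $\match = LFMM(G-e,\pi)$ and Algorithm~\ref{alg:delete} correctly does nothing beyond updating the neighborhood data structures.

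In the main case $e \in \match$, Algorithm~\ref{alg:delete} removes $e$ from $\match$, sets $P(u)=P(v)=-1$ and $k(u)=k(v)=\infty$, and inserts $u$ and $v$ into the priority queue $S$ with priority $\pi(e)$. I would then argue I1 and I2 hold before Algorithm~\ref{alg:partner} is called. For I1, observe that the only edges that could become uncovered by the removal of $e$ are edges incident to $u$ or $v$ with rank greater than $\pi(e)$ that were covered solely by $e$ in $\match$; any such edge has an endpoint in $\{u,v\} \subseteq S$ whose associated rank is exactly $\pi(e)<\pi(e')$, which is precisely I1. For I2, the lowest priority in $S$ is $\pi(e)$. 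Because ranks are unique, every edge of $G-e$ with rank at most $\pi(e)$ actually has rank strictly less than $\pi(e)$; in $\match$ such an edge was either itself a matching edge different from $e$ or was covered by a matching edge of rank strictly less than $\pi(e)$, and none of these covers were removed (only $e$ itself was), so they all remain covered. Hence I2 holds.

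Applying Lemma~\ref{lem:partner} then gives that, upon termination of Algorithm~\ref{alg:partner}, the matching $\match$ stored by the algorithm is a lexicographically first maximal matching of $G-e$ under $\pi$, which is exactly $LFMM(G-e,\pi)$. The main subtlety, and the place where I would be most careful, is the verification of I2: one must make sure that no matching edge of rank smaller than $\pi(e)$ was disturbed. This is immediate from the algorithm (it only removes $e$ from $\match$ before calling Algorithm~\ref{alg:partner}), but it is the step that relies most directly on the LFMM property of $\match$ before the deletion, namely that any edge of rank $<\pi(e)$ is covered by a matching edge of even smaller rank.
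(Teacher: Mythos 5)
Your proposal is correct and follows essentially the same route as the paper: the easy case $e \notin \match$ via Lemma~\ref{lem:killed}, and the case $e \in \match$ by checking Invariants I1 and I2 for the vertices $u,v$ inserted into $S$ with priority $\pi(e)$, then invoking Lemma~\ref{lem:partner}. Your verification of I2 is in fact spelled out slightly more carefully than the paper's, which only asserts it "analogously to the insertion" case.
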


\section{Experiments and Results} \label{s:experiments}

\subsection{Experimental Setup and Methodology.}

We implemented the algorithms using \CC-20 and compiled all codes using
g++-11.1.0 with full optimization (\texttt{-O3}). Our experiments were conducted
on a machine with two Intel Xeon E5-2643 v4 with 3.4GHz with 6 CPU cores each
and 1.5 TB RAM in total. We performed five repetitions per instance and report
average running~time. In this section we first describe the experimental
methodology. Afterwards, we evaluate different algorithmic choices in our
algorithm and then we compare our algorithm to the state of the art. When we
report a mean result we give the geometric mean as problems differ strongly in
result and time. All codes are sequential.

\subsubsection{Instances.}
\label{exp:instances}

For our experiments we used both both real-world and generated instances.
For generated instances, we use the \texttt{KaGen} graph
generator~\cite{funke2019communication}, which generates a wide variety of
static graph families. We use a family of random Erd\H{o}s-R\'enyi graphs and a
family of random hyperbolic geometric graphs to create our fully dynamic
instances.

An \emph{Erd\H{o}s-R\'enyi graph (ER graph)} $G(n,p)$ is a random graph with $n$
vertices where each two vertices are connected by an edge with probability $p$,
independently from the other edges in the graph. The graph has on average
${n\choose 2}\cdot p$ edges and a Poisson degree distribution.

\emph{Random Hyperbolic Geometric Graphs (RHG
graphs)}~\cite{krioukov2010hyperbolic} are random graphs that replicate many
features of social graphs~\cite{chakrabarti2006graph}: the degree distribution
follows a power law, they have small diameter and often exhibit a community
structure. These graphs are geometric graphs on a disk in hyperbolic space, in
which nodes that are close to each other (in hyperbolic space) are connected by
an edge. In these graphs, the nodes that are close to the center of the disk
have a very high degree, less central vertices usually have a much lower degree.

We generated static ER and RHG graphs with varying number of vertices and edges
and used different techniques to create fully dynamic instances from the static
graphs. We explain the dynamization methods when we report about the respective
experiments. On the ER graphs, the maximum degree $\Delta$ is generally
$20-40\%$ larger than the average degree, on the RHG graphs, $\Delta$ is
generally a constant factor of $n$, \ie there are vertices (close to the center
of the hyperbolic plane) with a very high degree.

Additionally, we used as data sets six real-world Wikipedia update
sequences that were obtained from the Koblenz Network Collection
KONECT~\cite{kunegis2013konect}. The graphs are directed graphs and given as a
sequence of individual edge insertions and deletions and model the link
structure between Wikipedia articles. While the instances are given as directed
graphs, we interpret them as undirected graphs, \ie while there exists either an
edge from $u$ to $v$ or from $v$ to $u$, our graph instance has an undirected
edge $(u,v)$. The instances are initially empty and have between $100K$ and
$2.2M$ vertices and between $1.6M$ and $86M$ edge updates. As these graphs have
very high maximum degree $\Delta$ and, thus, our implementations of the dynamic
coloring algorithms \colhp,~\colcount~and~\colbgkls~require space that is
quadratic in $n$, we only run experiments on them for the fully dynamic matching
problem.

\subsection[Fully Dynamic Coloring]{Fully Dynamic $(\Delta + 1)$ Coloring.}
\label{exp:color}

We implemented all algorithms described above, namely the random rank algorithm of Henzinger and
Peng~\cite{henzinger2020constant} (\colhp), the hierarchical algorithm of
Bhattacharya~\etal\cite{bhattacharya2019fully} (\colbgkls), as well as two
trivial algorithms (\colcount~and~\colrec) for the problem. 

The non-trivial algorithms
\colhp~and \colbgkls~use Google Dense Hash Set~\cite{web:googledense} to implement the neighborhood data structures efficiently, while the trivial
algorithms \colcount~and~\colrec~use dynamic size arrays (\texttt{std::vector}). For the trivial algorithms we can use dynamic size arrays, as updates generally affect all neighbors of a vertex and so we need to iterate over the whole neighborhood of a vertex. In \colhp~and~\colbgkls~there are more neighborhood queries (\ie is vertex $u$ a neighbor of vertex $v$?), which can be answered in expected constant time using a hash table.

\subsubsection{Random Update Sequence on Generated Graph.}
\label{exp:c-random}

For this experiment, we generated ER and RHG graphs with varying number of
vertices $n$ and number of edges $m$. For both we generated two
sets of graphs, in set (1) the average density remains constant while the number
of vertices changes; while in set (2) the number of vertices is constant while
the average density changes. In set (1), we generate graphs with $n = 2^{13},
2^{14}, 2^{15}, 2^{16}, 2^{17}, 2^{18}$ and $m = n \cdot 256$, whereas in set (2),
the number of vertices is constant $n=2^{16}$, while $m = n \cdot
\{64,128,256,512,1024\}$. For each of these $22$ graph instances, we generate
random update sequences that start with an empty graph with $n$ nodes and no
edges and insert the whole edge set $\mathcal{S}$ in random order, interspersed
with deletions of previously inserted edges. Let $\rho \in [0,1]$ be the
deletion rate. The random update sequence is generated as follows: first we
randomly shuffle the edges in $\mathcal{S}$, afterwards with probability
$\frac{1}{1 + \rho}$ we insert the next edge in $\mathcal{S}$ into $G$ or with
probability $\frac{\rho}{1 + \rho}$ we remove a random edge (if one exists) from
$G$. This process is repeated until every edge in $\mathcal{S}$ is inserted,
resulting in a total of $|\mathcal{S}|$ edge insertions and about $\rho
|\mathcal{S}|$ edge deletions. For each graph, we create an update sequence each
with $\rho = \{0, 10\%, 25\%, 50\%, 75\%\}$, which results in a total of $220$
graph instances. Note that $\rho = 0$ indicates that there are no deletions. Insertions are random and not dependent on earlier
insertions. Recall that the algorithms have to recolor a vertex whenever we insert an edge $e =
(u,v)$ where $\col(u) = \col(v)$, due to the random nature of the insertions and the fact
that there are $\Delta + 1$ available colors
this happens with a probability of roughly $\frac{1}{\Delta}$ independent of
the choice of the algorithm and the experiment shows average update times on
random updates. As the RHG graphs have very high maximum degree $\Delta$ and
therefore data structures grow very large, we exclude the initialization time in
this experiment; so we can report the update performance of the algorithms and
not how much time is required to initialize large arrays and hash tables. 

\begin{figure}[t!]
    \includegraphics[width=.49\textwidth]{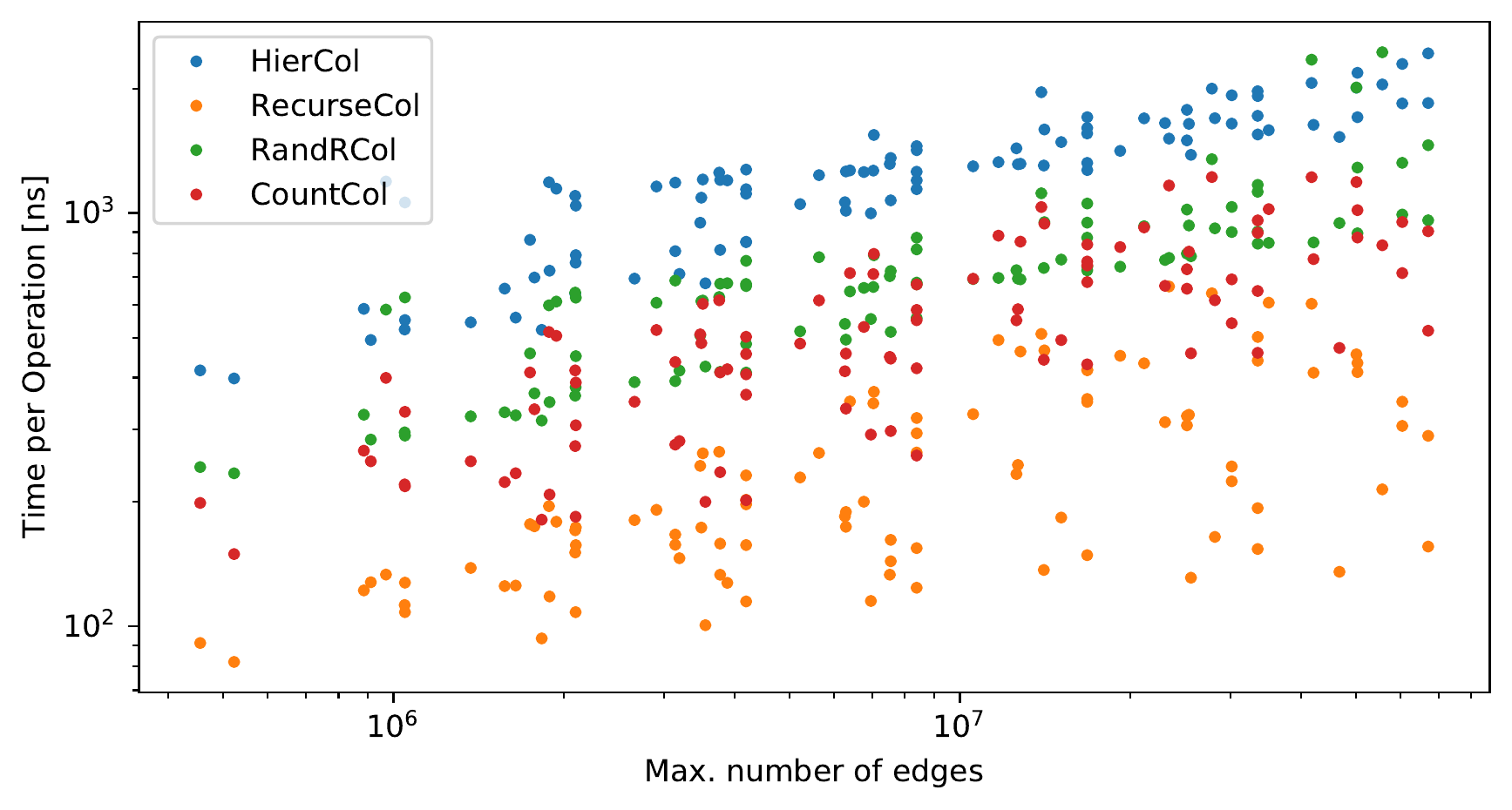}
    \caption{\label{fig:c-scaling} All random insertion and deletion sequences}
\end{figure}

Figure~\ref{fig:c-scaling} shows the results for all random insertion and
deletion sequences. We can see that in these random instances, the simple
\colrec~algorithm is fastest for almost all instances, followed by \colcount~and \colhp, which is on average about $30-50\%$ faster than \colbgkls. \emph{Interestingly, there is
no major effect of density (i.e., set (1) vs.~set (2)), graph type and deletion rate $\rho$ on the relative performance of
the algorithms, generally \colhp~performs slightly better than \colcount~on
denser instance with a higher deletion rate $\rho$.} 
For all
algorithms we can see that the average time per operation  increases slightly on
larger instances, as the basic operations are slightly slower on larger arrays
and hash tables.

For all algorithms, the majority
of the running time is not spent in recoloring, but in  the insertion and removal of
vertices from neighborhood data structures and the use of different neighborhood data structures has a major impact on the performance: While finding a
neighboring vertex can take up to $\Oh{\Delta}$ using arrays, the constant time
operations such as insertion and deletions are significantly faster on a hash table array compared to dynamic arrays. Specifically, on a data structure benchmark on the same hardware  inserting 1M (random integer) elements (into an empty resp.~already filled with 10M elements data structures) into a hash table vs.~a dynamic size array is a factor 10, resp.~13 faster,
accessing all elements sequentially is roughly a factor of 20 faster, and deleting 1M elements is roughly a factor of 100 faster. 

One explanation for the superiority of \colhp~over~\colbgkls~is that
\colhp~only updates neighborhood counts on the lower-rank side of a deleted edge,
while \colbgkls~counts the color of the incident vertex on both sides.
Therefore, edge deletions are faster in \colhp. Additionally, \colbgkls~changes
vertex levels in the hierarchy over the course of the algorithm, which induces
additional data structure insertions and deletions. In this experiment,
according to performance profiles \colbgkls~spends $~20-25\%$ of the time spent
in edge insertions in level changes, \ie moving neighboring vertices to
different data structures for the respective hierarchy levels. The vertex
ranking of \colhp~remains constant over the course of the algorithm, so
neighborhood data structures only need to be updated when edges are inserted or
deleted.

Performance profiles also indicate that the more complex updating schemes of
\colbgkls~and~\colhp~have a much larger impact on the running time compared to
the simple iteration over neighboring vertices in the trivial algorithms.
\colhp~spends about $25-30\%$ of the running time of insertions in recoloring
operations, \colbgkls~about $35-40\%$ of insertion time, whereas the trivial
algorithms rarely exceed $10\%$. For all algorithms, the vast majority of the
remaining running time is spent in adding and removing neighboring vertices from
the basic neighborhood data structures. Note that these numbers are derived from
sampling-based performance profiles and also vary depending on the instance.
They should therefore be seen as estimates, however they clearly indicate that
the simple recoloring scheme of trivial algorithm is significantly faster than
more elaborate recoloring approaches.

\subsubsection{Incremental Clashing Sequence.}
\label{exp:c-clash}

In this experiment we generate incremental (insertions-only) worst-case update sequences for both of the trivial algorithms
\colcount~and \colrec, by selecting two random vertices $u$ and $v$ with 
$\col(u) = \col(v) = \col$ and then add an edge $e = (u,v)$ to the graph, forcing the algorithm to recolor one of the two. Note that this creates two different worst-case updates
sequences as the two algorithms generate different colorings. 
Note that both \colhp~and \colbgkls~achieve constant amortized worst-case running time against an \emph{oblivious adversary} (that must generate all updates without knowing the colors assigned to the vertices)  by making sure that the palette of new possible
colors when recoloring has $\Theta(\Delta)$ colors and choosing a color from it randomly. However, their running time guarantees do \emph{not} hold against an \emph{adaptive} adversary as the one above.

\begin{figure}[t!]
    \includegraphics[width=.49\textwidth]{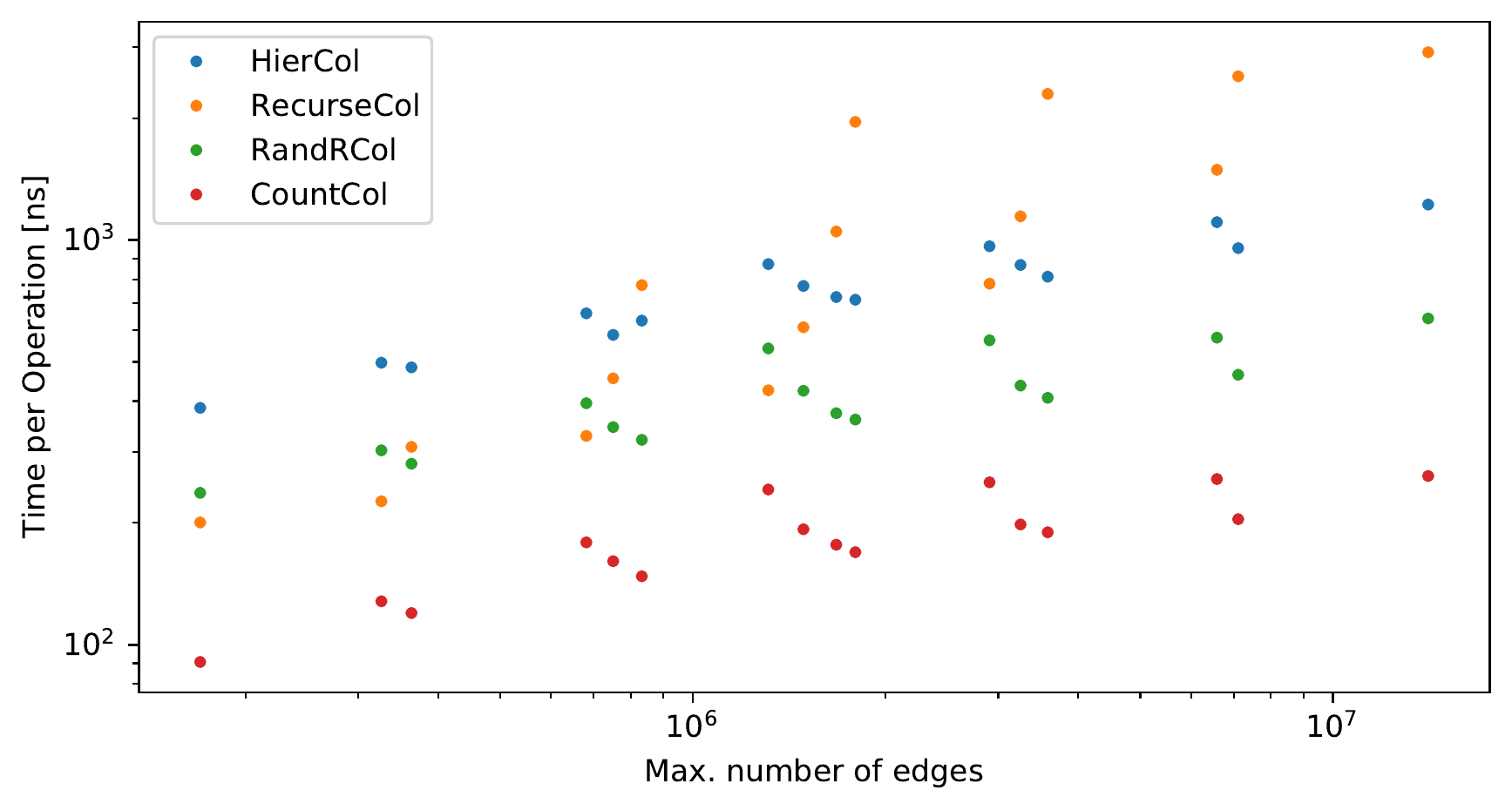}
    \caption{\label{fig:cc-recursive} Clashing Sequence for \colrec}
\end{figure}

For either algorithm we generated instances with $n = 2^{12}, 2^{13}, 2^{14},
2^{15}$ and a maximum degree of $128, 256, 512, 1024$, \ie a total of $16$
instances per algorithm. 
Figure~\ref{fig:cc-recursive} shows the clashing sequence for \colrec~and
Figure~\ref{fig:cc-counting} shows the clashing sequence for \colcount. Note
that in these figures, both the x- and the y-axis are logarithmic. We can see the expected
result: the time per operation in the clashing algorithm increases quickly achieving the worst performance over all algorithms. Furthermore, the running time of \colcount~increases on its worst-case sequence much faster (reaching over $10^4$ ns per operation after $10^6$ operations)  than \colrec~on its respective sequence. The reason is that
when recoloring,
\colcount~updates the neighborhood color counts of \emph{all} neighbors of the
recolored vertex, while \colrec~recolors only one or few vertices and checks
that no neighbors have the same color. While this also takes time linear to the
vertex degree, it is much faster in practice, as there are no write operations
for all neighbors of the recolored vertex. Note also that for reasonably large
$n$ both \colhp~and~\colbgkls~perform better than the clashing algorithm and
\colhp~always performs better than \colbgkls.

\subsubsection{Vertices of Almost Equal Degree.}
\label{exp:c-samedeg}

In Sections~\ref{exp:c-random}~and~\ref{exp:c-clash} we can see that the
simplicity of \colrec~makes the algorithm very fast when there are enough free
colors for many vertices and the recursive cascading does not recurse too
deeply. However, the situation changes when the average vertex degree is very
close to $\Delta$ and an average vertex does not have many free colors in its
neighborhood. In this experiment, we first generate random initial instances
$G_i$ with a maximum degree of $(\Delta - 1)$ by adding random edges between
vertices of degree $< (\Delta - 1)$ until the average degree is very close to
$(\Delta - 1)$. We then perform a random update sequence on top of this graph,
where we overlay this graph with a graph $G_o$ of maximum degree $1$, which
results in a graph $(G_i + G_o)$ of maximum degree $\Delta$. For each update, we
draw a random vertex $v$ and check whether it has a neighbor $u$ in $G_o$. If it
does, we remove $(v,u)$ from the graph, otherwise we draw random vertices until
we find a vertex $u$ that does not have a neighbor in $G_o$ and $(v,u) \not \in
E(G_i)$, and add $(v,u)$ to the graph. Therefore, the average degree is very
close to the maximum degree over the whole edge update sequence. We generate
$G_i$ with $n = 2^{13}, 2^{14}, 2^{15}, 2^{16}$ and a maximum degree of $512,
1024, 2048, 4096$ and perform $10^7$ random updates. 
\begin{figure}[t!]
    \includegraphics[width=.49\textwidth]{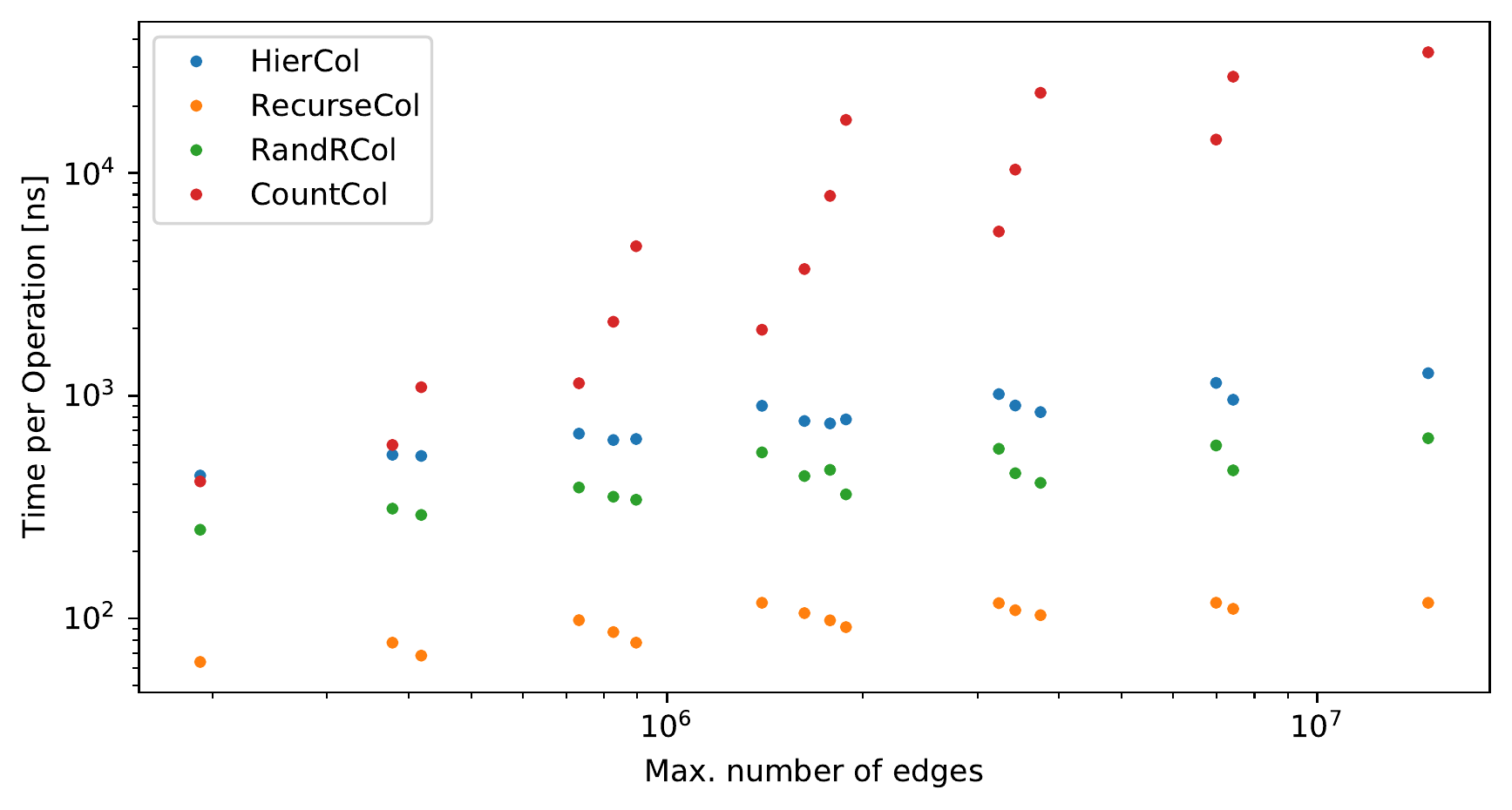}
    \caption{\label{fig:cc-counting} Clashing Sequence for \colcount}
\end{figure}

In Figure~\ref{fig:cc-samedeg} we can see that \colrec~performs worse compared to other algorithms when the
average degree is very close to the maximum degree $\Delta$ and the recursion
depth grows. On the instances, in which the average degree is $>0.999\Delta$,
the algorithm is significantly slower than \colcount, which does not experience slowdown when the average degree is close to the maximum degree. The two non-trivial algorithms also do not experience slowdown when the average degree is almost as large as the maximum degree.

We also generated graphs $G_i$ in which we made sure that every vertex has equal degree $(\Delta - 1)$. When performing this experiment on those graphs, \colrec~often runs into non-ending recursive calls where the recursion depth keeps growing until the recursion stack overflows. Therefore, in Figure~\ref{fig:cc-samedeg} we use graphs where the average degree is \emph{almost equal} to the maximum degree.

\begin{figure}[t!]
    \includegraphics[width=.49\textwidth]{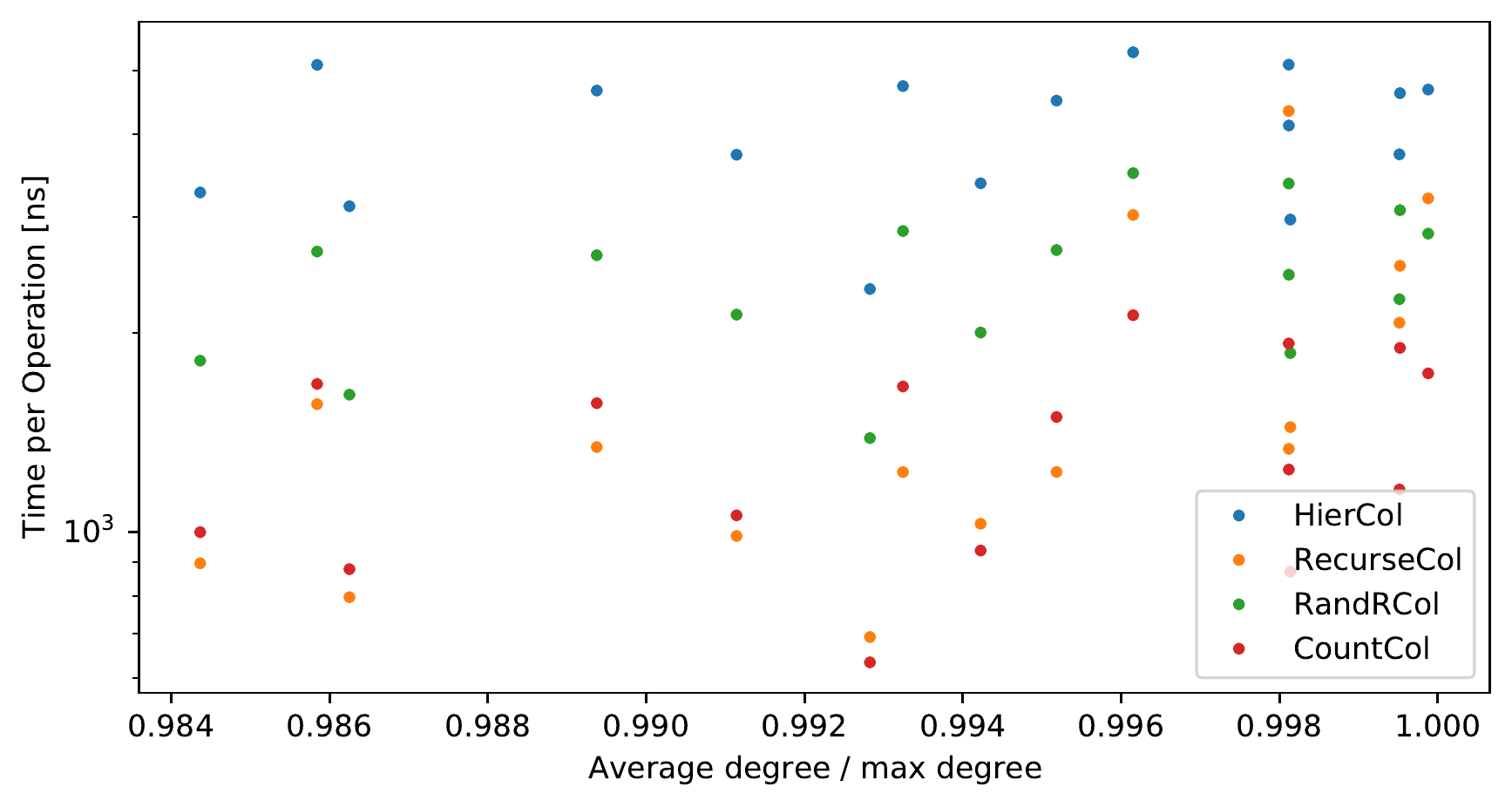}
    \caption{\label{fig:cc-samedeg} Vertices of almost equal degree}
\end{figure}

\subsection{Fully Dynamic Maximal Matching.}

For the fully dynamic maximal matching problem, we implemented the algorithms of
Behnezhad et al.~\cite{behnezhad2019fully} (\matchbbhss), the algorithm of
Solomon~\cite{solomon2016fully} (\matchsolomon), our random-rank algorithm described in detail in
Section~\ref{ss:alg-match} (\matchrr) and the trivial algorithm  described in
Section~\ref{ss:alg-match} (\matchtriv), 
and we used an implementation of Henzinger~\etal\cite{henzinger2020dynamic} of
the algorithm of Baswana et al.~\cite{baswana2015fully} (\matchbgs).

The hierarchical algorithms \matchbgs~and \matchsolomon~use Google Dense Hash
Set~\cite{web:googledense} for neighborhood data structures so that they can
perform neighborhood queries in expected constant time, the trivial algorithm
\matchtriv~uses dynamic size arrays (\texttt{std::vector}). As \matchbbhss~needs
to keep all neighbors in order and often updates this order, it uses a
self-balancing binary tree data structure (\texttt{std::set}). Compared to
Google Dense Hash Set, this self-balancing binary tree is slower by a factor of
12, resp.~25 on insert of 1M random elements in an empty (or pre-filled with 10M
elements) data structure, by a factor of 11 on sequential access and by a factor
of 30, resp.~35 on deletion of 1M elements. \matchrr~has neither neighborhood queries nor needs to update the order of
neighbors (as edge ranks remain constant). The fastest implementation of the
neighborhood data structure therefore has for each vertex $v$ an unsorted
dynamic size array of neighbors and a dynamic size array of deleted neighbors.
When accessing a vertex $v$, we sort both arrays and remove the neighbors that
were since removed; and then clear the array of deleted neighbors. While this
incurs $\Oh{\Delta \log{\Delta}}$ running time, it is faster than both
balancing and non-balancing binary trees.

\subsubsection{Random Update Sequence on Generated Graph.}
\label{exp:m-random}

We performed experiments for the fully dynamic maximal matching problem using the same
random update sequences that we used in Section~\ref{exp:c-random} for the fully
dynamic $(\Delta + 1)$ coloring problem.

\begin{figure}[t!]
    \includegraphics[width=.49\textwidth]{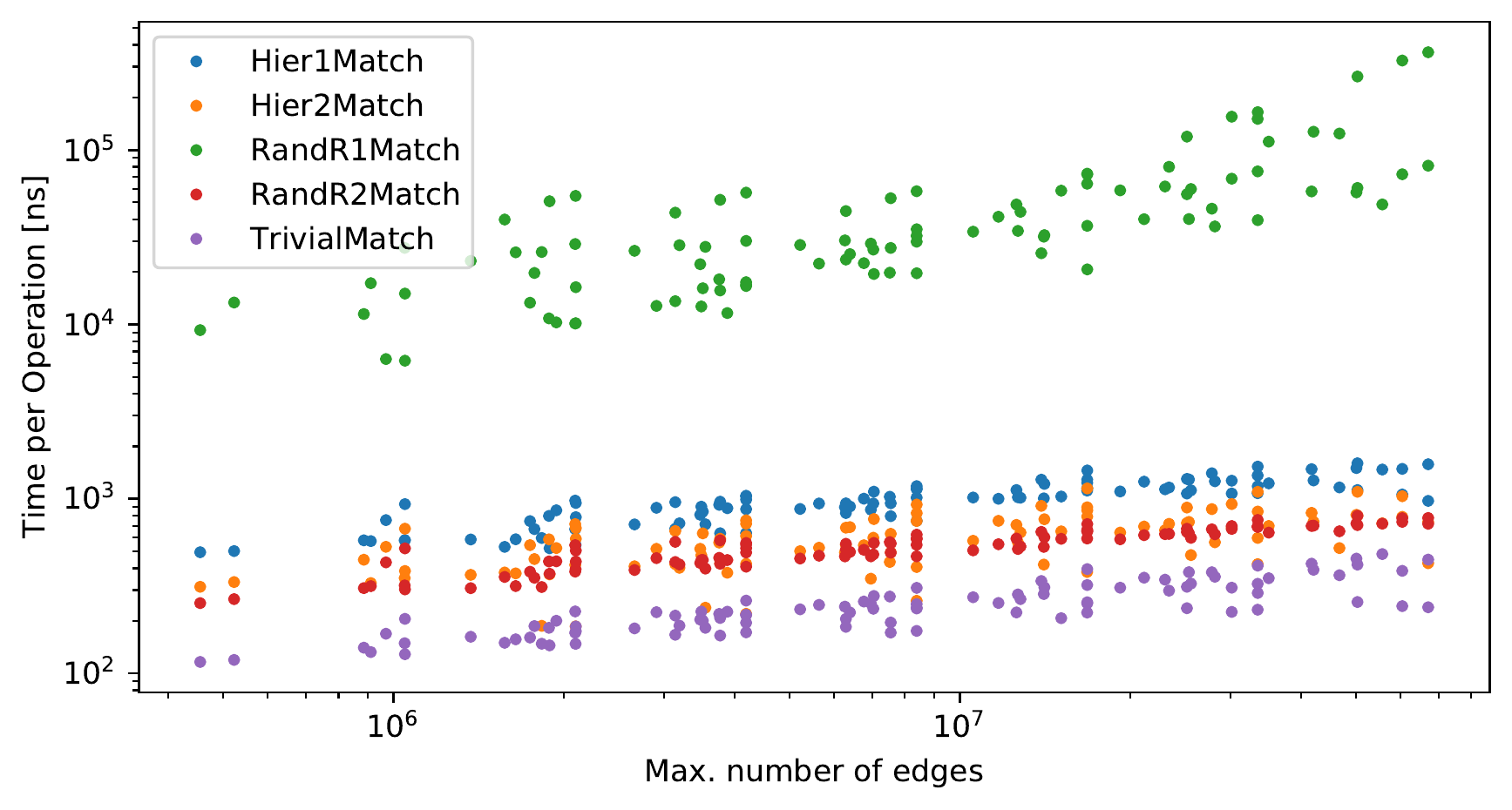}
    \caption{\label{fig:m-scaling} All random insertion and deletion sequences}
\end{figure}

In Figure~\ref{fig:m-scaling} we can see the results for this experiment. We can
clearly see that \matchbbhss~is more than an order slower than the other
algorithms on most instances and also scales the worst. The algorithm performs
especially poorly on instances with a high average degree. Most of the update time is
spent on updating the neighborhood data structure $N(v)$, as whenever an edge is
added or removed from the matching, the eliminator rank of all incident edges
needs to be updated, which results in a large number of updates in a
self-balancing binary search tree (implemented as \texttt{std::set}).
Similar to the fully-dynamic $(\Delta + 1)$ coloring problem, we can see that
the simple trivial algorithm \matchtriv~performs very well with random
insertions and deletions. For the closely related algorithms
\matchbgs~and~\matchsolomon~we can see that \matchsolomon~is faster all instances, with a
speedup factor of up to $2.5$, not significantly affected by
graph type, density, deletion rate or number of vertices. On most instances,
\matchrr~is~$10-30\%$ faster than \matchsolomon~,even though the algorithm does not
have good theoretical performance guarantees.
Looking at performance profiles we can see that \matchsolomon~spends up to
$40\%$ in hierarchy level updates, while the simpler two-level implementation of
\matchbgs~only spends $20-25\%$ in level updates. This is not all to surprising,
as the implementation in \matchsolomon~uses $\log{n}$ levels while the
implementation in \matchbgs~only uses two. For both algorithms this is the
largest part of the time spent in updating the maximal matching, the rest of the
running time of the algorithms is spent in adding and removing elements from
neighborhood data structures.

For \matchbbhss, the vast majority of running time is spent in updating the
neighborhood binary trees. \matchrr~spends about a quarter of the total running
time in Algorithm~\ref{alg:partner}, which updates the matching and
\matchtriv~spends about $10\%$ of the running time in matching updates. The rest of the respective running times are spent in updating
neighborhood data structures when inserting and removing edges.

\textbf{Size of the maximal matchings} We also evaluated the {size of the
maximal matchings found}. The geometric mean of the matching size
\matchbbhss~and~\matchrr~about $0.3\%$ lower than the geometric mean of the
matching size of the other algorithms whose matching sizes are all very similar.
This is not very surprising as the matchings of \matchbbhss~and~\matchrr~are
lexicographically first maximal matching, i.e., maximal matchings that must
fulfill an additional requirement.

\begin{figure}[t!]
    \centering
    \includegraphics[width=.4\textwidth]{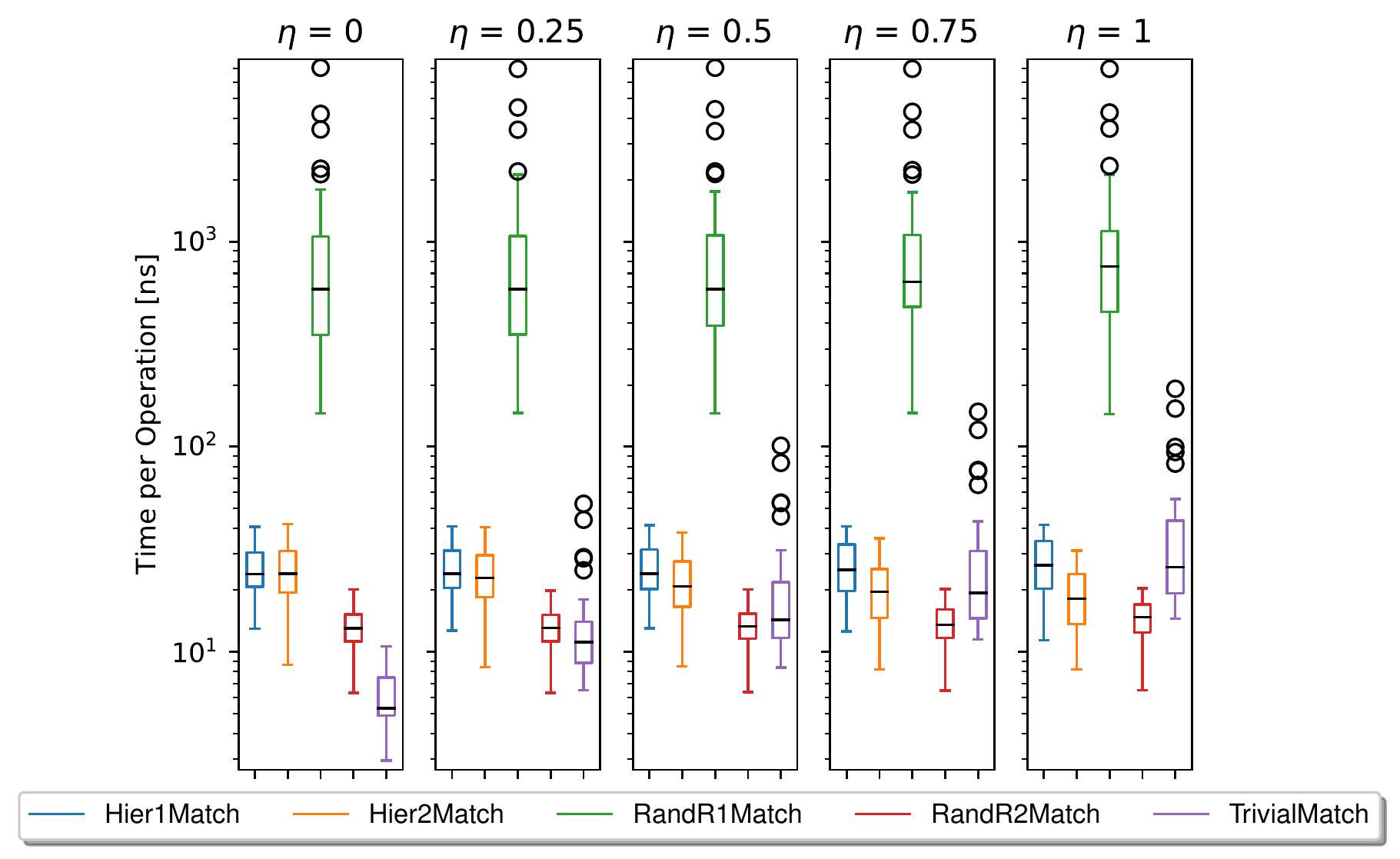}
    \caption{\label{fig:m-windows} Sliding window instances}
\end{figure}

\subsubsection{Generated Sliding Window Instances.}
\label{exp:m-windows}
In this experiment, we generate sliding window instances, in which we insert
edges from generated ER and RHG graphs in random order and delete the oldest
edge whenever the dynamic graph has $\phi$ (see below) edges. We also give an additional
parameter $\eta \in [0,1]$ that is used to generate harder instances for
\matchtriv. For each edge deletion, with probability $\eta$ we delete the oldest
edge that is part of the matching of \matchtriv, with probability $1 - \eta$ we
delete the oldest edge in the sliding window. Similar to the coloring problems,
an adversary that is oblivious to the random choices made by the algorithms cannot create such
worst-case instances for the non-trivial algorithms.
For either graph generator, we generate graphs with $n = 2^{14}, 2^{15}, 2^{16},
2^{17}, 2^{18}$ and $m = 2^{25}$. For each graph, we generated sliding windows with window size $\phi =
2^{20}, 2^{21}, 2^{22}, 2^{23}$ and $\eta = 0, 25\%, 50\%, 75\%, 100\%$, which
results in a total of $100$ instances.
Figure~\ref{fig:m-windows} shows box plots for these instances. In these box plots, the box extends from the lower to the upper quartile of the data, the line indicates the median. The whiskers extend to show the range, with outlier results denoted by black dots. The results with
$\eta = 0$ are similar to the results in Section~\ref{exp:m-random}, with
increased parameter $\eta$ \matchtriv~is much slower, as each deleted matching
edge means that both adjacent vertices are now unmatched and need to check
whether they have an incident edge to another unmatched vertex. However, even
when $\eta = 1$, \matchtriv~is still significantly faster than \matchbbhss.

\subsubsection{Real-world Graphs.}
\label{exp:m-wiki}

\begin{figure}[t!]
    \includegraphics[width=.49\textwidth]{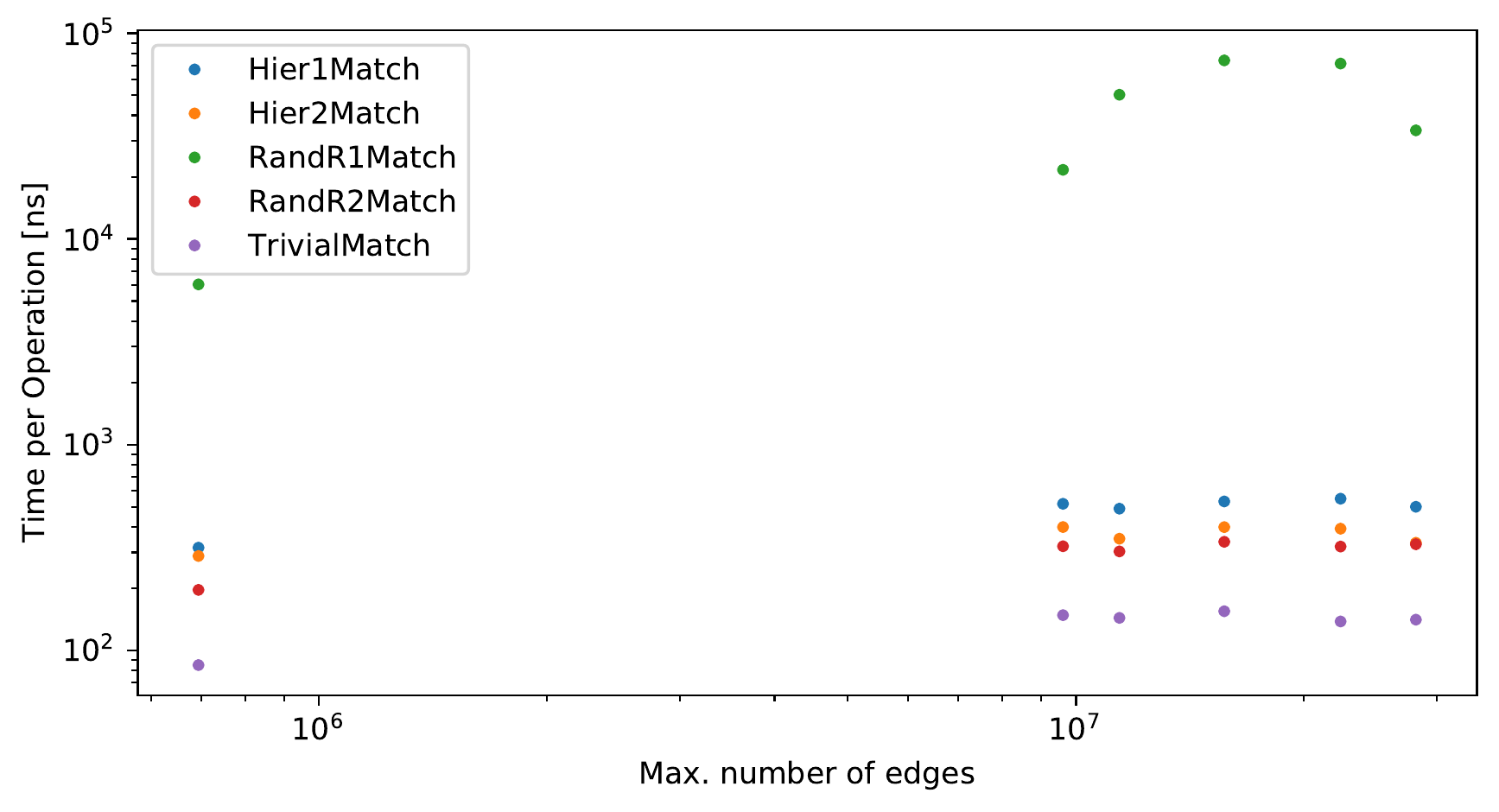}
    \caption{\label{fig:m-wiki} Real-world wikipedia edit instances}
\end{figure}

Figure~\ref{fig:m-wiki} shows results of the wikipedia edit
instances described in Section~\ref{exp:instances}. On these, the
algorithms have similar performances to the instances in
Section~\ref{exp:m-random}: \matchtriv~is the fastest algorithm, followed by
\matchrr~and~\matchsolomon, then followed by \matchbgs. The algorithm \matchbbhss~is slower by one to two orders of magnitude.

\section{Conclusion}\label{s:conclusion}

We empirically evaluated a variety of algorithms for the fully dynamic $(\Delta + 1)$ coloring
problem and the fully dynamic maximal matching problem. 
For both problems the trivial algorithms performed best, except on specifically chosen worst-case input sequences. 
The reason is the use of different data structures for storing the neighborhood of each vertex and how they interact with the caching system.
Dynamic algorithms with random rank also clearly outperformed 
algorithms using a hierarchical graph decompositions. The reason for this results is the frequent use of expensive data structures for updating the hierarchy.
However, for maximal matching the random-rank based algorithm with best asymptotic performance that time $O(\log^4 n)$ and it was clearly outperformed by the hierarchical algorithms with
$O(\log n)$ or constant expected update time. It would be very interesting to find a constant-time random rank algorithm and compare its performance empirically. We gave a random-rank based algorithm
that outperforms the constant-time hierarchical algorithms, but its asymptotic bound is $O(nm)$.

To summarize this work shows that the extensive use of data structures that 
support random accesses significantly slow down the non-trivial dynamic algorithm in practice, pointing to a (known) weakness of the RAM model which need to be adapted
to increase the impact of theoretical algorithms research. One such attempt is the external-memory model, but as it completely ignores computation time, it is also overlooking a large contribution to the running time of dynamic algorithms.

\bibliographystyle{abbrv} 
\bibliography{paper}

\clearpage

\begin{appendices}
\section{Additional Proofs}
\label{app:proofs}
\begin{customlemma}{4.1}  A matching $\match$ is $LFMM(G,\pi)$ if and
    only if for each edge $e = (u,v) \in E$ either $e \in \match$ or $\exists w \in
    N(u) \cup N(v)$ s.t. $(u,w) \in \match$ with $\pi(u,w) < \pi(e)$ or $(v,w) \in
    M$ with $\pi(v,w) < \pi(e)$.
    \end{customlemma}
    
    \begin{proof}
    In the construction algorithm of $LFMM(G, \pi)$, every edge is either added to
    the matching or covered by an incident edge. For all edges $e \not \in M$, there
    therefore exists an edge of smaller rank that is incident to $e$.
    
    To prove the other direction, let $\match$ be the $LFMM(G, \pi)$ and $\match_o$
    be a matching with $\match \neq \match_o$, i.e. they differ in at least one
    edge. As $\match$ is a maximal matching, $\match \not\subseteq \match_o$, i.e.
    it is not fully encompassed in matching $\match_o$. Thus, there exists an edge
    $e \in E$ that is in one of the matchings but not the other. Let $e_f \in E$ be
    the lowest rank edge that is in one matching but not the other. For every edge
    $e_l \in E$ with $\pi(e_l) < \pi(e_f)$, $e_l \in M$ if and only if $e_l \in
    \match_o$. We now look at both cases where $e_f \in M$ and $e_f \in M_o$ and
    show that $\match_o$ is not an LFMM for $G$ with ranking $\pi$.
    
    \emph{Case 1: $e_f \in \match_o$, $e_f \not\in \match$.} As $e_f \not\in \match$, it must
    have been covered by an incident edge of smaller rank. Thus there exists an edge
    $e_x \in M$ incident to $e_f$ with $\pi(e_x) < \pi(e_f)$ that covers $e_f$. Due
    to the assumption that $e_f$ is the lowest rank edge in which the matchings
    differ, $e_x \in \match_o$ and $\match_o$ is not a matching.
    
    \emph{Case 2: $e_f \not\in \match_o$, $e_f \in \match$.} As $e_f \in \match$, we know
    that no incident edge of rank $< e_f$ covers $e_f$. As the matchings do not
    differ in any edge of rank $< e_f$, this is also true for $\match_o$. Therefore
    $e_f \not\in M_o$ but no incident matching edge with smaller rank exists. This
    contradicts the assumption that every non-matching edge is covered.   
    
    This shows that any matching that differs from $\match$ in at least one edge is
    not $LFMM(G,\pi)$.
    \end{proof}

    \begin{customclaim}{1} 
        In Algorithm~\ref{alg:partner}, when vertex $v$ is removed from $S$ with
        rank $r_v$, then any new vertex $w$ that is added to $S$ during this
        iteration of Algorithm~\ref{alg:partner} is added with rank $> r_v$.
    \end{customclaim}

    \begin{proof}
        In Algorithm~\ref{alg:partner}, we try to find a incident matching edge for
        vertex $v$. We iterate over all incident edges $e = (v,w)$ with $\pi(e) >
        r_v$. The edge $e$ is added to the matching, if $\pi(e) < k(v)$ and $\pi(e)
        < k(w)$ and we add the current matching partners of $v$ and $w$ to $S$ (if they
        exist). $P(v)$ (and $P(w)$ analogously) is added to $S$ with priority
        $k(v)$, which is $> \pi(v) > r_v$.
    \end{proof}
    
    \begin{customclaim}{2}
        Algorithm~\ref{alg:partner} checks vertices in non-decreasing priority
        order.
    \end{customclaim}
    
    \begin{proof}
        Let $v$ and $r_v$ be the vertex and its priority that is currently checked
        in the loop of Algorithm~\ref{alg:partner}. In order to show that all
        vertices checked at a later point of the algorithm have priority $\geq
        r_v$, we need to show the following: (1) All vertices that are currently in
        $S$ have priority $\geq r_v$ and (2) all vertices that are added to $S$ at
        any later point have priority $\geq r_v$. (1) follows from the fact that $S$
        is a priority queue and returns the vertex with smallest priority, (2) follows directly from
        Claim~\ref{claim:onlylarger}.
    \end{proof}

    \begin{customlemma}{4.3} Given the matching $\match =
        LFMM(G,\pi)$ and the insertion of an edge $e = (u,v)$,
        Algorithm~\ref{alg:insert} maintains the lexicographically first maximum
        matching $LFMM(G+e,\pi)$.
        \end{customlemma}
        
        \begin{proof}
        Algorithm~\ref{alg:insert} adds $e$ to the data structures and checks whether it
        is uncovered, i.e. has to be inserted into matching $\match$. In this case, we
        need to evict the current matching edges incident to $u$ and $v$ from $\match$
        (if they exist). All edges incident to $u$ and $v$ are still covered, as the
        ranking of their incident matching edge only decreased, but their previous
        matching partners $P(u)$ and $P(v)$ are now unmatched. Let $e_x = (u, P(u))$ be
        an evicted matching edge (analogous for $v$ and $P(v)$). 
        Note that $e_x$ is covered by $e$.
        However, all edges $e_2$
        incident to $P(u)$ with $e_2 > e_x$ were covered by $e_x$ and are now
        potentially uncovered. Edges $e_2$ incident to $P(u)$ with $e_x < e_2$ were
        previously already not covered by $e_x$ and therefore are guaranteed to remain
        covered even though $e_x$ was removed from the matching. As no other edges were changed, we know that all edges not incident to
        $P(u)$ and $P(v)$ and all edges of rank $\le$min$(\pi(u, P(u)), \pi(v, P(v)))$ are
        covered. We add the unmatched partner vertices $P(u)$ and $P(v)$ (if they exist)
        to priority queue $S$. The invariants I1 and I2 currently hold as only edges incident to
        $P(u)$ and $P(v)$ with rank greater than their original matching edge are
        potentially uncovered.
        
        Then Algorithm~\ref{alg:partner} is called and the correctness follows from Lemma~\ref{lem:partner}.
        \end{proof}
        
        \begin{customlemma}{4.4} Given the matching $\match =
             LFMM(G,\pi)$ and the deletion of an edge $e = (u,v)$,
             Algorithm~\ref{alg:delete} maintains the lexicographically first maximum
             matching $LFMM(G-e,\pi)$.
        \end{customlemma}
        
        \begin{proof}
         In Algorithm~\ref{alg:delete}, we delete an edge $e$ and check whether it is in
         the matching $\match$. If it is not, no edges were covered by $e$, thus every
         edge in $G$ is still covered after removal of $e$, and $\match$ is still the
         $LFMM(G-e,\pi)$ due to Lemma~\ref{lem:killed}. If $e$ is a matching edge, we
         remove it from $\match$ and add $u$ and $v$ to $S$, as all incident edges with
         rank $> \pi(e)$ are now potentially uncovered. Analogously to the insertion in
         Lemma~\ref{lem:insert-correct}, all potentially uncovered edges are incident to
         $u$ or $v$ and have rank $> \pi(e)$. Thus Invariants I1 and I2 hold.
         Then Algorithm~\ref{alg:partner} is called and the correctness follows from Lemma~\ref{lem:partner}.
        \end{proof}

        \begin{customclaim}{3}
            Algorithm~\ref{alg:partner} terminates after at most $m$ iterations. 
        \end{customclaim}
        
        \begin{proof}
            In order to prove that the update operations terminate, we need to look at
            how vertices are added to the priority queue $S$, as the algorithm
            terminates as soon as $S$ is empty. A vertex $v$ is added with priority
            $r_v$ exactly when an incident edge $e$ with rank $r_v$ is removed from the
            matching $\match$. We process the vertices in $S$ in increasing priority
            order, thus, according to Claim~\ref{claim:non-decreasing}, when $v$ with
            priority $r_v$ is at the front of $S$, all edges with rank $< r_v$ are
            \emph{settled}, i.e. they will not be added to or removed from $\match$ at a
            later point of the update operation. In Algorithm~\ref{alg:partner}, we try
            to find a new matching edge $e'$ and check all incident edges with
            rank $> r_v$. Thus, as the edge ranks are all unique, $e'$ is now settled as well and will not be updated
            at a later point. As each iteration of the while-loop in
            Algorithm~\ref{alg:partner} settles at least one edge, the algorithm
            terminates after at most $m$ iterations.
        \end{proof}

\section{Additional Pseudocode and Tables}
\label{app:pseudocode}

\begin{algorithm}[h!]
    \caption{Edge Insertion \label{alg:insert}}
    \begin{algorithmic}[1]
        \INPUT $e = (u,v) \leftarrow $: edge to insert
        \State $\pi(e) \leftarrow$ rand$(0,1)$
        \State $N(v)$.insert$(\pi(e), u)$
        \State $N(u)$.insert$(\pi(e), v)$
        \State $S \leftarrow $ empty priority queue
        \If{$\pi(e) < min(k(v), k(u))$}
        \State $p_v \leftarrow P(v)$
        \If{$p_v \neq -1$}
            \State S.insert($p_v$, $k(p_v)$)
            \State $k(p_v) \leftarrow \infty$
            \State $P(p_v) \leftarrow -1$ 
        \EndIf
        \State $p_u \leftarrow P(u)$
        \If{$p_u \neq -1$}
            \State S.insert($p_u$, $k(p_u)$)
            \State $k(p_u) \leftarrow \infty$
            \State $P(p_u) \leftarrow -1$ 
        \EndIf
        \State $k(u) \leftarrow \pi(e)$ 
        \State $k(v) \leftarrow \pi(e)$
        \State $P(u) \leftarrow v$
        \State $P(v) \leftarrow u$
        \EndIf
        \State findNewPartners($S$)
    \end{algorithmic}
\end{algorithm}

\begin{algorithm}[h!]
    \caption{Edge Deletion \label{alg:delete}}
    \begin{algorithmic}[1]
        \INPUT $e = (u,v) \leftarrow$ edge to remove
        \State $N(v)$.delete$(u)$
        \State $N(u)$.delete$(v)$
        \State $S \leftarrow $ empty priority queue
        \If{$P(v) = u$}
            \State S.insert($u, \pi(e)$)
            \State S.insert($v, \pi(e)$)
            \State $k(v) \leftarrow \infty$
            \State $k(u) \leftarrow \infty$
            \State $P(v) \leftarrow -1$
            \State $P(u) \leftarrow -1$
        \EndIf

        \State findNewPartners($S$)
    \end{algorithmic}
\end{algorithm}

\begin{table*}[!h]
    \centering
\begin{tabular}{l|r|r|r|r}
    $\rho$ & \colbgkls & \colrec & \colhp & \colcount \\ \hline \hline
    0.0 & \numprint{1385.65} (\numprint{9.39}) & \numprint{147.58} (\numprint{1.00}) & \numprint{838.87} (\numprint{5.68}) & \numprint{440.00} (\numprint{2.98})  \\
    0.1 & \numprint{1366.87} (\numprint{5.94}) & \numprint{230.04} (\numprint{1.00}) & \numprint{775.95} (\numprint{3.37}) & \numprint{548.09} (\numprint{2.38})  \\
    0.25 & \numprint{1314.98} (\numprint{4.42}) & \numprint{297.32} (\numprint{1.00}) & \numprint{780.06} (\numprint{2.62}) & \numprint{623.54} (\numprint{2.10})  \\
    0.5 & \numprint{1208.33} (\numprint{3.83}) & \numprint{315.64} (\numprint{1.00}) & \numprint{682.45} (\numprint{2.16}) & \numprint{634.96} (\numprint{2.01})  \\
    0.75 & \numprint{1106.47} (\numprint{4.40}) & \numprint{251.55} (\numprint{1.00}) & \numprint{615.94} (\numprint{2.45}) & \numprint{528.27} (\numprint{2.10})  \\

\end{tabular}
\caption{Average time in $ns$ per operation (and slowdown to fastest), Section~\ref{exp:c-random}, for all different values of $\rho$.}
\end{table*}

\begin{table*}[!h]
    \centering
    \begin{tabular}{l|r|r|r|r}
        Graph & \colbgkls & \colrec & \colhp & \colcount \\ \hline \hline
ER $n=2^{13}, m = n \cdot 2^{8}$ & \numprint{612.57} (\numprint{5.61}) & \numprint{109.16} (\numprint{1.00}) & \numprint{317.38} (\numprint{2.91}) & \numprint{197.19} (\numprint{1.81})  \\
ER $n=2^{14}, m = n \cdot 2^{8}$ & \numprint{733.53} (\numprint{5.52}) & \numprint{132.92} (\numprint{1.00}) & \numprint{373.52} (\numprint{2.81}) & \numprint{240.90} (\numprint{1.81})  \\
ER $n=2^{15}, m = n \cdot 2^{8}$ & \numprint{974.99} (\numprint{6.12}) & \numprint{159.26} (\numprint{1.00}) & \numprint{500.24} (\numprint{3.14}) & \numprint{312.55} (\numprint{1.96})  \\
ER $n=2^{16}, m = n \cdot 2^{6}$ & \numprint{1165.88} (\numprint{7.47}) & \numprint{156.14} (\numprint{1.00}) & \numprint{679.39} (\numprint{4.35}) & \numprint{400.40} (\numprint{2.56})  \\
ER $n=2^{16}, m = n \cdot 2^{7}$ & \numprint{1252.77} (\numprint{7.11}) & \numprint{176.28} (\numprint{1.00}) & \numprint{708.24} (\numprint{4.02}) & \numprint{433.68} (\numprint{2.46})  \\
ER $n=2^{16}, m = n \cdot 2^{8}$ & \numprint{1371.70} (\numprint{6.46}) & \numprint{212.18} (\numprint{1.00}) & \numprint{745.25} (\numprint{3.51}) & \numprint{505.44} (\numprint{2.38})  \\
ER $n=2^{16}, m = n \cdot 2^{9}$ & \numprint{1479.21} (\numprint{5.57}) & \numprint{265.45} (\numprint{1.00}) & \numprint{800.28} (\numprint{3.01}) & \numprint{584.39} (\numprint{2.20})  \\
ER $n=2^{16}, m = n \cdot 2^{10}$ & \numprint{1645.04} (\numprint{4.53}) & \numprint{362.78} (\numprint{1.00}) & \numprint{886.99} (\numprint{2.45}) & \numprint{750.66} (\numprint{2.07})  \\
ER $n=2^{17}, m = n \cdot 2^{8}$ & \numprint{1742.95} (\numprint{6.08}) & \numprint{286.83} (\numprint{1.00}) & \numprint{1001.00} (\numprint{3.49}) & \numprint{701.61} (\numprint{2.45})  \\
ER $n=2^{18}, m = n \cdot 2^{8}$ & \numprint{2111.15} (\numprint{5.53}) & \numprint{381.66} (\numprint{1.00}) & \numprint{1260.63} (\numprint{3.30}) & \numprint{935.03} (\numprint{2.45})  \\
RHG $n=2^{13}, m = n \cdot 2^{8}$ & \numprint{506.94} (\numprint{4.40}) & \numprint{115.24} (\numprint{1.00}) & \numprint{297.43} (\numprint{2.58}) & \numprint{223.28} (\numprint{1.94})  \\
RHG $n=2^{14}, m = n \cdot 2^{8}$ & \numprint{673.94} (\numprint{4.65}) & \numprint{144.82} (\numprint{1.00}) & \numprint{384.58} (\numprint{2.66}) & \numprint{286.35} (\numprint{1.98})  \\
RHG $n=2^{15}, m = n \cdot 2^{8}$ & \numprint{985.15} (\numprint{5.18}) & \numprint{190.01} (\numprint{1.00}) & \numprint{515.13} (\numprint{2.71}) & \numprint{421.89} (\numprint{2.22})  \\
RHG $n=2^{16}, m = n \cdot 2^{6}$ & \numprint{1158.60} (\numprint{7.20}) & \numprint{160.91} (\numprint{1.00}) & \numprint{618.36} (\numprint{3.84}) & \numprint{465.90} (\numprint{2.90})  \\
RHG $n=2^{16}, m = n \cdot 2^{7}$ & \numprint{1250.88} (\numprint{5.92}) & \numprint{211.21} (\numprint{1.00}) & \numprint{674.32} (\numprint{3.19}) & \numprint{545.16} (\numprint{2.58})  \\
RHG $n=2^{16}, m = n \cdot 2^{8}$ & \numprint{1278.62} (\numprint{4.85}) & \numprint{263.53} (\numprint{1.00}) & \numprint{680.36} (\numprint{2.58}) & \numprint{607.43} (\numprint{2.30})  \\
RHG $n=2^{16}, m = n \cdot 2^{9}$ & \numprint{1408.32} (\numprint{4.12}) & \numprint{341.49} (\numprint{1.00}) & \numprint{727.64} (\numprint{2.13}) & \numprint{704.60} (\numprint{2.06})  \\
RHG $n=2^{16}, m = n \cdot 2^{10}$ & \numprint{1520.41} (\numprint{3.29}) & \numprint{462.34} (\numprint{1.00}) & \numprint{824.23} (\numprint{1.78}) & \numprint{863.87} (\numprint{1.87})  \\
RHG $n=2^{17}, m = n \cdot 2^{8}$ & \numprint{1636.89} (\numprint{4.66}) & \numprint{351.57} (\numprint{1.00}) & \numprint{905.24} (\numprint{2.57}) & \numprint{817.54} (\numprint{2.33})  \\
RHG $n=2^{18}, m = n \cdot 2^{8}$ & \numprint{2019.66} (\numprint{4.17}) & \numprint{484.80} (\numprint{1.00}) & \numprint{1858.53} (\numprint{3.83}) & \numprint{1101.60} (\numprint{2.27})  \\

\end{tabular}
\caption{Average time in $ns$ per operation (and slowdown to fastest), Section~\ref{exp:c-random}, for different Erd\"os-Renyi and random hyperbolic geometric graph families }
\end{table*}

\begin{table*}[!h]
    \centering
    \begin{tabular}{l|r|r|r|r}
        & \colbgkls & \colrec & \colhp & \colcount \\ \hline \hline
$n=2^{12}, m = n \cdot 2^{7}$ & \numprint{384.92} (\numprint{4.25}) & \numprint{200.51} (\numprint{2.21}) & \numprint{237.26} (\numprint{2.62}) & \numprint{90.66} (\numprint{1.00})  \\
$n=2^{12}, m = n \cdot 2^{8}$ & \numprint{484.63} (\numprint{4.05}) & \numprint{308.30} (\numprint{2.57}) & \numprint{280.07} (\numprint{2.34}) & \numprint{119.73} (\numprint{1.00})  \\
$n=2^{12}, m = n \cdot 2^{9}$ & \numprint{632.33} (\numprint{4.28}) & \numprint{773.94} (\numprint{5.24}) & \numprint{320.88} (\numprint{2.17}) & \numprint{147.61} (\numprint{1.00})  \\
$n=2^{12}, m = n \cdot 2^{10}$ & \numprint{712.27} (\numprint{4.21}) & \numprint{1959.38} (\numprint{11.57}) & \numprint{360.09} (\numprint{2.13}) & \numprint{169.29} (\numprint{1.00})  \\
$n=2^{13}, m = n \cdot 2^{7}$ & \numprint{497.63} (\numprint{3.89}) & \numprint{226.20} (\numprint{1.77}) & \numprint{302.26} (\numprint{2.36}) & \numprint{128.02} (\numprint{1.00})  \\
$n=2^{13}, m = n \cdot 2^{8}$ & \numprint{583.08} (\numprint{3.63}) & \numprint{455.31} (\numprint{2.83}) & \numprint{345.00} (\numprint{2.15}) & \numprint{160.83} (\numprint{1.00})  \\
$n=2^{13}, m = n \cdot 2^{9}$ & \numprint{723.19} (\numprint{4.09}) & \numprint{1049.81} (\numprint{5.94}) & \numprint{373.34} (\numprint{2.11}) & \numprint{176.64} (\numprint{1.00})  \\
$n=2^{13}, m = n \cdot 2^{10}$ & \numprint{811.83} (\numprint{4.28}) & \numprint{2297.92} (\numprint{12.12}) & \numprint{407.48} (\numprint{2.15}) & \numprint{189.58} (\numprint{1.00})  \\
$n=2^{14}, m = n \cdot 2^{7}$ & \numprint{659.06} (\numprint{3.68}) & \numprint{328.14} (\numprint{1.83}) & \numprint{395.15} (\numprint{2.21}) & \numprint{179.02} (\numprint{1.00})  \\
$n=2^{14}, m = n \cdot 2^{8}$ & \numprint{770.18} (\numprint{3.99}) & \numprint{608.97} (\numprint{3.16}) & \numprint{424.16} (\numprint{2.20}) & \numprint{192.90} (\numprint{1.00})  \\
$n=2^{14}, m = n \cdot 2^{9}$ & \numprint{868.08} (\numprint{4.38}) & \numprint{1144.72} (\numprint{5.77}) & \numprint{436.99} (\numprint{2.20}) & \numprint{198.23} (\numprint{1.00})  \\
$n=2^{14}, m = n \cdot 2^{10}$ & \numprint{955.15} (\numprint{4.68}) & \numprint{2539.51} (\numprint{12.43}) & \numprint{464.66} (\numprint{2.27}) & \numprint{204.27} (\numprint{1.00})  \\
$n=2^{15}, m = n \cdot 2^{7}$ & \numprint{872.28} (\numprint{3.61}) & \numprint{425.36} (\numprint{1.76}) & \numprint{539.78} (\numprint{2.23}) & \numprint{241.87} (\numprint{1.00})  \\
$n=2^{15}, m = n \cdot 2^{8}$ & \numprint{965.36} (\numprint{3.83}) & \numprint{780.35} (\numprint{3.10}) & \numprint{565.28} (\numprint{2.24}) & \numprint{251.98} (\numprint{1.00})  \\
$n=2^{15}, m = n \cdot 2^{9}$ & \numprint{1107.30} (\numprint{4.31}) & \numprint{1491.58} (\numprint{5.81}) & \numprint{574.00} (\numprint{2.24}) & \numprint{256.68} (\numprint{1.00})  \\
$n=2^{15}, m = n \cdot 2^{10}$ & \numprint{1223.99} (\numprint{4.69}) & \numprint{2911.95} (\numprint{11.15}) & \numprint{640.29} (\numprint{2.45}) & \numprint{261.22} (\numprint{1.00})  \\
        
\end{tabular}
\caption{Average time in $ns$ per operation (and slowdown to fastest), Section~\ref{exp:c-clash}, Clashing Sequence for \colrec}
\end{table*}

\begin{table*}[!h]
    \centering
    \begin{tabular}{l|r|r|r|r}
        & \colbgkls & \colrec & \colhp & \colcount \\ \hline \hline
$n=2^{12}, m = n \cdot 2^{7}$ & \numprint{438.18} (\numprint{6.87}) & \numprint{63.76} (\numprint{1.00}) & \numprint{250.12} (\numprint{3.92}) & \numprint{412.22} (\numprint{6.47})  \\
$n=2^{12}, m = n \cdot 2^{8}$ & \numprint{536.57} (\numprint{7.88}) & \numprint{68.07} (\numprint{1.00}) & \numprint{291.50} (\numprint{4.28}) & \numprint{1092.24} (\numprint{16.05})  \\
$n=2^{12}, m = n \cdot 2^{9}$ & \numprint{639.16} (\numprint{8.21}) & \numprint{77.83} (\numprint{1.00}) & \numprint{341.07} (\numprint{4.38}) & \numprint{4693.29} (\numprint{60.30})  \\
$n=2^{12}, m = n \cdot 2^{10}$ & \numprint{782.89} (\numprint{8.57}) & \numprint{91.38} (\numprint{1.00}) & \numprint{360.64} (\numprint{3.95}) & \numprint{17327.34} (\numprint{189.62})  \\
$n=2^{13}, m = n \cdot 2^{7}$ & \numprint{543.10} (\numprint{6.98}) & \numprint{77.81} (\numprint{1.00}) & \numprint{311.17} (\numprint{4.00}) & \numprint{601.15} (\numprint{7.73})  \\
$n=2^{13}, m = n \cdot 2^{8}$ & \numprint{632.55} (\numprint{7.29}) & \numprint{86.83} (\numprint{1.00}) & \numprint{351.94} (\numprint{4.05}) & \numprint{2148.88} (\numprint{24.75})  \\
$n=2^{13}, m = n \cdot 2^{9}$ & \numprint{751.01} (\numprint{7.66}) & \numprint{98.02} (\numprint{1.00}) & \numprint{464.69} (\numprint{4.74}) & \numprint{7897.90} (\numprint{80.57})  \\
$n=2^{13}, m = n \cdot 2^{10}$ & \numprint{844.29} (\numprint{8.16}) & \numprint{103.42} (\numprint{1.00}) & \numprint{406.32} (\numprint{3.93}) & \numprint{22930.45} (\numprint{221.72})  \\
$n=2^{14}, m = n \cdot 2^{7}$ & \numprint{675.75} (\numprint{6.88}) & \numprint{98.18} (\numprint{1.00}) & \numprint{387.18} (\numprint{3.94}) & \numprint{1137.84} (\numprint{11.59})  \\
$n=2^{14}, m = n \cdot 2^{8}$ & \numprint{769.93} (\numprint{7.29}) & \numprint{105.62} (\numprint{1.00}) & \numprint{435.88} (\numprint{4.13}) & \numprint{3706.01} (\numprint{35.09})  \\
$n=2^{14}, m = n \cdot 2^{9}$ & \numprint{903.79} (\numprint{8.30}) & \numprint{108.85} (\numprint{1.00}) & \numprint{448.75} (\numprint{4.12}) & \numprint{10364.31} (\numprint{95.21})  \\
$n=2^{14}, m = n \cdot 2^{10}$ & \numprint{959.41} (\numprint{8.69}) & \numprint{110.40} (\numprint{1.00}) & \numprint{462.33} (\numprint{4.19}) & \numprint{27090.15} (\numprint{245.39})  \\
$n=2^{15}, m = n \cdot 2^{7}$ & \numprint{901.02} (\numprint{7.66}) & \numprint{117.55} (\numprint{1.00}) & \numprint{556.56} (\numprint{4.73}) & \numprint{1978.43} (\numprint{16.83})  \\
$n=2^{15}, m = n \cdot 2^{8}$ & \numprint{1016.79} (\numprint{8.68}) & \numprint{117.11} (\numprint{1.00}) & \numprint{577.63} (\numprint{4.93}) & \numprint{5462.71} (\numprint{46.65})  \\
$n=2^{15}, m = n \cdot 2^{9}$ & \numprint{1141.97} (\numprint{9.71}) & \numprint{117.58} (\numprint{1.00}) & \numprint{598.01} (\numprint{5.09}) & \numprint{14135.98} (\numprint{120.22})  \\
$n=2^{15}, m = n \cdot 2^{10}$ & \numprint{1262.60} (\numprint{10.75}) & \numprint{117.48} (\numprint{1.00}) & \numprint{644.49} (\numprint{5.49}) & \numprint{34874.17} (\numprint{296.85})  \\

\end{tabular}
\caption{Average time in $ns$ per operation (and slowdown to fastest), Section~\ref{exp:c-clash}, Clashing Sequence for \colcount}
\end{table*}

\begin{table*}[!h]
    \centering
    \begin{tabular}{l|r|r|r|r|r}
        Graph & Avg deg. / max deg. & \colbgkls & \colrec & \colhp & \colcount \\ \hline \hline
        $n=2^{13}, m= n \cdot 2^{9}$ & 0.99283 & \numprint{2333.08} (\numprint{3.68}) & \numprint{692.55} (\numprint{1.09}) & \numprint{1386.77} (\numprint{2.19}) & \numprint{634.04} (\numprint{1.00})  \\
        $n=2^{13}, m= n \cdot 2^{10}$ & 0.98625 & \numprint{3114.35} (\numprint{3.91}) & \numprint{795.91} (\numprint{1.00}) & \numprint{1613.66} (\numprint{2.03}) & \numprint{877.70} (\numprint{1.10})  \\
        $n=2^{13}, m= n \cdot 2^{11}$ & 0.98437 & \numprint{3267.36} (\numprint{3.65}) & \numprint{895.82} (\numprint{1.00}) & \numprint{1816.83} (\numprint{2.03}) & \numprint{998.68} (\numprint{1.11})  \\
        $n=2^{13}, m= n \cdot 2^{12}$ & 0.98584 & \numprint{5101.24} (\numprint{3.27}) & \numprint{1560.94} (\numprint{1.00}) & \numprint{2659.12} (\numprint{1.70}) & \numprint{1675.18} (\numprint{1.07})  \\
        $n=2^{14}, m= n \cdot 2^{9}$ & 0.99814 & \numprint{2972.85} (\numprint{3.42}) & \numprint{1441.01} (\numprint{1.66}) & \numprint{1865.96} (\numprint{2.15}) & \numprint{869.67} (\numprint{1.00})  \\
$n=2^{14}, m= n \cdot 2^{10}$ & 0.99423 & \numprint{3374.15} (\numprint{3.60}) & \numprint{1028.43} (\numprint{1.10}) & \numprint{2004.28} (\numprint{2.14}) & \numprint{936.67} (\numprint{1.00})  \\
$n=2^{14}, m= n \cdot 2^{11}$ & 0.99114 & \numprint{3728.04} (\numprint{3.78}) & \numprint{985.55} (\numprint{1.00}) & \numprint{2133.60} (\numprint{2.16}) & \numprint{1058.80} (\numprint{1.07})  \\
$n=2^{14}, m= n \cdot 2^{12}$ & 0.98938 & \numprint{4663.57} (\numprint{3.47}) & \numprint{1344.03} (\numprint{1.00}) & \numprint{2625.03} (\numprint{1.95}) & \numprint{1566.44} (\numprint{1.17})  \\
$n=2^{15}, m= n \cdot 2^{9}$ & 0.99952 & \numprint{3732.84} (\numprint{3.22}) & \numprint{2073.84} (\numprint{1.79}) & \numprint{2251.59} (\numprint{1.94}) & \numprint{1159.05} (\numprint{1.00})  \\
$n=2^{15}, m= n \cdot 2^{10}$ & 0.99812 & \numprint{4128.14} (\numprint{3.32}) & \numprint{1335.69} (\numprint{1.07}) & \numprint{2452.47} (\numprint{1.97}) & \numprint{1242.67} (\numprint{1.00})  \\
$n=2^{15}, m= n \cdot 2^{11}$ & 0.99518 & \numprint{4503.01} (\numprint{3.66}) & \numprint{1232.00} (\numprint{1.00}) & \numprint{2672.62} (\numprint{2.17}) & \numprint{1492.68} (\numprint{1.21})  \\
$n=2^{15}, m= n \cdot 2^{12}$ & 0.99324 & \numprint{4736.47} (\numprint{3.85}) & \numprint{1231.70} (\numprint{1.00}) & \numprint{2856.47} (\numprint{2.32}) & \numprint{1660.23} (\numprint{1.35})  \\
$n=2^{16}, m= n \cdot 2^{9}$ & 0.99988 & \numprint{4679.67} (\numprint{2.69}) & \numprint{3200.96} (\numprint{1.84}) & \numprint{2830.52} (\numprint{1.63}) & \numprint{1738.15} (\numprint{1.00})  \\
$n=2^{16}, m= n \cdot 2^{10}$ & 0.99952 & \numprint{4621.44} (\numprint{2.43}) & \numprint{2529.88} (\numprint{1.33}) & \numprint{3071.81} (\numprint{1.62}) & \numprint{1899.85} (\numprint{1.00})  \\
$n=2^{16}, m= n \cdot 2^{11}$ & 0.99812 & \numprint{5105.18} (\numprint{2.65}) & \numprint{4341.48} (\numprint{2.25}) & \numprint{3371.23} (\numprint{1.75}) & \numprint{1929.09} (\numprint{1.00})  \\
$n=2^{16}, m= n \cdot 2^{12}$ & 0.99615 & \numprint{5329.71} (\numprint{2.50}) & \numprint{3021.18} (\numprint{1.42}) & \numprint{3496.29} (\numprint{1.64}) & \numprint{2129.03} (\numprint{1.00})  \\

\end{tabular}
\caption{Average time in $ns$ per operation (and slowdown to fastest), Section~\ref{exp:c-samedeg}}
\end{table*}

\begin{table*}[!h]
    \centering
\begin{tabular}{l|r|r|r|r|r}
    $\rho$ & \matchbbhss & \matchbgs & \matchrr & \matchsolomon & \matchtriv \\ \hline \hline
0.0 & \numprint{68232.92} (\numprint{254.55}) & \numprint{1010.38} (\numprint{3.77}) & \numprint{558.35} (\numprint{2.08}) & \numprint{417.68} (\numprint{1.56}) & \numprint{268.06} (\numprint{1.00})  \\
0.1 & \numprint{65916.21} (\numprint{254.60}) & \numprint{1005.68} (\numprint{3.88}) & \numprint{563.03} (\numprint{2.17}) & \numprint{620.09} (\numprint{2.40}) & \numprint{258.90} (\numprint{1.00})  \\
0.25 & \numprint{56402.74} (\numprint{219.98}) & \numprint{1011.97} (\numprint{3.95}) & \numprint{548.94} (\numprint{2.14}) & \numprint{633.58} (\numprint{2.47}) & \numprint{256.39} (\numprint{1.00})  \\
0.5 & \numprint{38825.19} (\numprint{158.09}) & \numprint{1002.12} (\numprint{4.08}) & \numprint{526.47} (\numprint{2.14}) & \numprint{658.69} (\numprint{2.68}) & \numprint{245.59} (\numprint{1.00})  \\
0.75 & \numprint{20261.76} (\numprint{94.55}) & \numprint{911.05} (\numprint{4.25}) & \numprint{483.74} (\numprint{2.26}) & \numprint{667.51} (\numprint{3.11}) & \numprint{214.29} (\numprint{1.00})  \\

\end{tabular}
\caption{Average time in $ns$ per operation (and slowdown to fastest), Section~\ref{exp:m-random}, for all different values of $\rho$.}
\end{table*}

\begin{table*}[!h]
    \centering
    \begin{tabular}{l|r|r|r|r|r}
        Graph & \matchbbhss & \matchbgs & \matchrr & \matchsolomon & \matchtriv \\ \hline \hline
        ER $n=2^{13}, m = n \cdot 2^{8}$ & \numprint{37194.02} (\numprint{261.91}) & \numprint{535.18} (\numprint{3.77}) & \numprint{341.80} (\numprint{2.41}) & \numprint{323.63} (\numprint{2.28}) & \numprint{142.01} (\numprint{1.00})  \\
        ER $n=2^{14}, m = n \cdot 2^{8}$ & \numprint{39242.31} (\numprint{244.56}) & \numprint{639.29} (\numprint{3.98}) & \numprint{390.64} (\numprint{2.43}) & \numprint{374.46} (\numprint{2.33}) & \numprint{160.46} (\numprint{1.00})  \\
        ER $n=2^{15}, m = n \cdot 2^{8}$ & \numprint{40378.22} (\numprint{224.28}) & \numprint{806.29} (\numprint{4.48}) & \numprint{480.36} (\numprint{2.67}) & \numprint{500.07} (\numprint{2.78}) & \numprint{180.03} (\numprint{1.00})  \\
        ER $n=2^{16}, m = n \cdot 2^{6}$ & \numprint{12425.32} (\numprint{55.74}) & \numprint{974.62} (\numprint{4.37}) & \numprint{553.41} (\numprint{2.48}) & \numprint{612.15} (\numprint{2.75}) & \numprint{222.90} (\numprint{1.00})  \\
        ER $n=2^{16}, m = n \cdot 2^{7}$ & \numprint{21547.75} (\numprint{103.62}) & \numprint{967.58} (\numprint{4.65}) & \numprint{536.05} (\numprint{2.58}) & \numprint{624.13} (\numprint{3.00}) & \numprint{207.94} (\numprint{1.00})  \\
        ER $n=2^{16}, m = n \cdot 2^{8}$ & \numprint{44139.36} (\numprint{197.16}) & \numprint{1088.99} (\numprint{4.86}) & \numprint{582.90} (\numprint{2.60}) & \numprint{648.29} (\numprint{2.90}) & \numprint{223.88} (\numprint{1.00})  \\
        ER $n=2^{16}, m = n \cdot 2^{9}$ & \numprint{106605.57} (\numprint{451.90}) & \numprint{1120.17} (\numprint{4.75}) & \numprint{649.76} (\numprint{2.75}) & \numprint{690.57} (\numprint{2.93}) & \numprint{235.91} (\numprint{1.00})  \\
        ER $n=2^{16}, m = n \cdot 2^{10}$ & \numprint{237704.22} (\numprint{926.70}) & \numprint{1097.79} (\numprint{4.28}) & \numprint{702.56} (\numprint{2.74}) & \numprint{744.72} (\numprint{2.90}) & \numprint{256.51} (\numprint{1.00})  \\
        ER $n=2^{17}, m = n \cdot 2^{8}$ & \numprint{51162.74} (\numprint{162.03}) & \numprint{1280.94} (\numprint{4.06}) & \numprint{671.22} (\numprint{2.13}) & \numprint{849.03} (\numprint{2.69}) & \numprint{315.75} (\numprint{1.00})  \\
        ER $n=2^{18}, m = n \cdot 2^{8}$ & \numprint{54897.60} (\numprint{132.92}) & \numprint{1529.76} (\numprint{3.70}) & \numprint{765.26} (\numprint{1.85}) & \numprint{1022.91} (\numprint{2.48}) & \numprint{413.00} (\numprint{1.00})  \\
        RHG $n=2^{13}, m = n \cdot 2^{8}$ & \numprint{20293.78} (\numprint{141.88}) & \numprint{567.05} (\numprint{3.96}) & \numprint{301.04} (\numprint{2.10}) & \numprint{314.12} (\numprint{2.20}) & \numprint{143.03} (\numprint{1.00})  \\
        RHG $n=2^{14}, m = n \cdot 2^{8}$ & \numprint{22767.57} (\numprint{129.68}) & \numprint{680.43} (\numprint{3.88}) & \numprint{374.21} (\numprint{2.13}) & \numprint{390.13} (\numprint{2.22}) & \numprint{175.57} (\numprint{1.00})  \\
        RHG $n=2^{15}, m = n \cdot 2^{8}$ & \numprint{24662.86} (\numprint{113.24}) & \numprint{838.95} (\numprint{3.85}) & \numprint{440.53} (\numprint{2.02}) & \numprint{480.77} (\numprint{2.21}) & \numprint{217.79} (\numprint{1.00})  \\
        RHG $n=2^{16}, m = n \cdot 2^{6}$ & \numprint{10728.51} (\numprint{51.36}) & \numprint{858.08} (\numprint{4.11}) & \numprint{443.96} (\numprint{2.13}) & \numprint{484.52} (\numprint{2.32}) & \numprint{208.88} (\numprint{1.00})  \\
        RHG $n=2^{16}, m = n \cdot 2^{7}$ & \numprint{18692.50} (\numprint{79.10}) & \numprint{938.83} (\numprint{3.97}) & \numprint{489.00} (\numprint{2.07}) & \numprint{529.89} (\numprint{2.24}) & \numprint{236.32} (\numprint{1.00})  \\
        RHG $n=2^{16}, m = n \cdot 2^{8}$ & \numprint{28621.40} (\numprint{112.17}) & \numprint{966.38} (\numprint{3.79}) & \numprint{496.04} (\numprint{1.94}) & \numprint{558.27} (\numprint{2.19}) & \numprint{255.15} (\numprint{1.00})  \\
        RHG $n=2^{16}, m = n \cdot 2^{9}$ & \numprint{49899.54} (\numprint{169.68}) & \numprint{1054.92} (\numprint{3.59}) & \numprint{568.25} (\numprint{1.93}) & \numprint{621.73} (\numprint{2.11}) & \numprint{294.09} (\numprint{1.00})  \\
        RHG $n=2^{16}, m = n \cdot 2^{10}$ & \numprint{96823.97} (\numprint{291.86}) & \numprint{1165.62} (\numprint{3.51}) & \numprint{634.83} (\numprint{1.91}) & \numprint{685.68} (\numprint{2.07}) & \numprint{331.74} (\numprint{1.00})  \\
        RHG $n=2^{17}, m = n \cdot 2^{8}$ & \numprint{33715.72} (\numprint{100.35}) & \numprint{1225.00} (\numprint{3.65}) & \numprint{608.60} (\numprint{1.81}) & \numprint{705.11} (\numprint{2.10}) & \numprint{335.98} (\numprint{1.00})  \\
        RHG $n=2^{18}, m = n \cdot 2^{8}$ & \numprint{47052.31} (\numprint{113.11}) & \numprint{1428.94} (\numprint{3.43}) & \numprint{691.69} (\numprint{1.66}) & \numprint{829.97} (\numprint{2.00}) & \numprint{416.00} (\numprint{1.00})  \\

\end{tabular}
\caption{Average time in $ns$ per operation (and slowdown to fastest), Section~\ref{exp:m-random}, for different Erd\"os-Renyi and random hyperbolic geometric graph families }
\end{table*}

\begin{table*}[!h]
    \centering
    \begin{tabular}{l|r|r|r|r|r}
$\eta$ & \matchbbhss & \matchbgs & \matchrr & \matchsolomon & \matchtriv \\ \hline \hline
0 & \numprint{1053.71} (\numprint{174.69}) & \numprint{25.33} (\numprint{4.20}) & \numprint{13.07} (\numprint{2.17}) & \numprint{25.29} (\numprint{4.19}) & \numprint{6.03} (\numprint{1.00})  \\
0.25 & \numprint{1057.44} (\numprint{80.77}) & \numprint{25.40} (\numprint{1.94}) & \numprint{13.09} (\numprint{1.00}) & \numprint{24.03} (\numprint{1.84}) & \numprint{14.05} (\numprint{1.07})  \\
0.5 & \numprint{1063.82} (\numprint{80.23}) & \numprint{25.59} (\numprint{1.93}) & \numprint{13.26} (\numprint{1.00}) & \numprint{22.40} (\numprint{1.69}) & \numprint{22.17} (\numprint{1.67})  \\
0.75 & \numprint{1093.46} (\numprint{79.42}) & \numprint{26.28} (\numprint{1.91}) & \numprint{13.77} (\numprint{1.00}) & \numprint{20.71} (\numprint{1.50}) & \numprint{30.55} (\numprint{2.22})  \\
1 & \numprint{1133.05} (\numprint{77.56}) & \numprint{27.08} (\numprint{1.85}) & \numprint{14.61} (\numprint{1.00}) & \numprint{18.97} (\numprint{1.30}) & \numprint{39.32} (\numprint{2.69})  \\
\end{tabular}
\caption{Average time in $ns$ per operation (and slowdown to fastest), Section~\ref{exp:m-windows}}
\end{table*}

\begin{table*}[!h]
    \centering
    \begin{tabular}{l|r|r|r|r|r}
        Graph & \matchbbhss & \matchbgs & \matchrr & \matchsolomon & \matchtriv \\ \hline \hline

dewiki & \numprint{33801.67} (\numprint{239.28}) & \numprint{499.93} (\numprint{3.54}) & \numprint{328.57} (\numprint{2.33}) & \numprint{334.01} (\numprint{2.36}) & \numprint{141.26} (\numprint{1.00})  \\
frwiki & \numprint{71420.54} (\numprint{515.89}) & \numprint{547.24} (\numprint{3.95}) & \numprint{320.29} (\numprint{2.31}) & \numprint{391.16} (\numprint{2.83}) & \numprint{138.44} (\numprint{1.00})  \\
itwiki & \numprint{73972.12} (\numprint{476.47}) & \numprint{530.66} (\numprint{3.42}) & \numprint{337.78} (\numprint{2.18}) & \numprint{398.15} (\numprint{2.56}) & \numprint{155.25} (\numprint{1.00})  \\
nlwiki & \numprint{21734.20} (\numprint{146.27}) & \numprint{516.84} (\numprint{3.48}) & \numprint{321.31} (\numprint{2.16}) & \numprint{398.45} (\numprint{2.68}) & \numprint{148.59} (\numprint{1.00})  \\
plwiki & \numprint{50364.98} (\numprint{349.47}) & \numprint{489.49} (\numprint{3.40}) & \numprint{303.27} (\numprint{2.10}) & \numprint{349.70} (\numprint{2.43}) & \numprint{144.12} (\numprint{1.00})  \\
simplewiki & \numprint{6025.97} (\numprint{70.93}) & \numprint{316.30} (\numprint{3.72}) & \numprint{197.13} (\numprint{2.32}) & \numprint{288.12} (\numprint{3.39}) & \numprint{84.96} (\numprint{1.00})  \\

\end{tabular}
\caption{Average time in $ns$ per operation (and slowdown to fastest), Section~\ref{exp:m-wiki}}
\end{table*}

\end{appendices}

\end{document}